\newtheorem{theorem}{Theorem}[section]
\newtheorem{proposition}{Proposition}
\theoremstyle{definition}
\newtheorem{definition}[theorem]{Definition}
\newcommand{\bs}{\boldsymbol}
\newcommand{\var}{\text{Var}}
\newcommand{\corr}{\text{Corr}}
\newcommand{\cov}{\text{Cov}}
\newcommand{\dom}{\mathcal{D}om}
\newcommand{\precop}{\mathcal{L}}
\newcommand{\Op}{\mathcal{A}}
\newcommand{\covop}{\mathcal{C}}
\newcommand{\n}{\boldsymbol{n} }
\newcommand{\dn}{\partial \n }
\newcommand{\x}{{\boldsymbol{x}}}
\newcommand{\y}{{\boldsymbol{y}}}
\newcommand{\z}{{\boldsymbol{z}}}
\newcommand{\xmy}{ \| \x - \y \| }
\newcommand{\kr}{\kappa r}
\newcommand{\numberthis}{\addtocounter{equation}{1}\tag{\theequation}}
\newcommand{\mycite}{\cite}
\DeclareMathOperator*{\argmin}{arg\,min}
\title[Boundary Conditions and PDE-Based Covariance Operators]
{Mitigating the Influence of the Boundary on
  PDE-based Covariance Operators}
\author{Yair Daon \and Georg Stadler}
\subjclass{Primary: 
  62F15, 
  35R30, 
  65C50; 
  Secondary:
  28C20, 
}
\keywords{Gaussian random fields, Mat\'ern kernels, boundary
conditions, fast PDE solvers,
Bayesian statistics, Inverse problems.}
\thanks{Supported in part by the National Science Foundation under
  grants \#1507009 and \#1522736, and by 
the U.S.\ Department of Energy Office of Science, Advanced
Scientific Computing Research (ASCR), Scientific Discovery through
Advanced Computing (SciDAC) program. }
\begin{document}
\maketitle

\vspace*{-2ex}
{\footnotesize 
 \centerline{Courant Institute, New York University}
   \centerline{ New York, NY 10012, USA}
}

\bigskip

\begin{abstract}
  Gaussian random fields over infinite-dimensional Hilbert spaces
  require the definition of appropriate covariance operators. The use
  of elliptic PDE operators to construct covariance operators allows
  to build on fast PDE solvers for manipulations with the resulting
  covariance and precision operators. However, PDE operators require a
  choice of boundary conditions, and this choice can have a strong and
  usually undesired influence on the Gaussian random field. We
  propose two techniques that allow to ameliorate these boundary
  effects for large-scale problems. The first approach combines the
  elliptic PDE operator with a Robin boundary condition, where a
  varying Robin coefficient is computed from an optimization problem.
  The second approach normalizes the pointwise variance by rescaling
  the covariance operator. These approaches can be used individually
  or can be combined. We study properties of these approaches, and
  discuss their computational complexity. The performance of our
  approaches is studied for random fields defined over simple and
  complex two- and three-dimensional domains.
\end{abstract}

\section{Introduction}
Gaussian random fields over functions, sometimes referred to as
continuously indexed Gaussian random fields, are important in spatial
statistical modeling, geostatistics and in inverse problems. They are
described through a mean and a covariance operator, usually defined
over a Hilbert space.  Efficient manipulation of random fields is of
critical importance in applications. In particular, one commonly
requires the application of the covariance operator and of its
inverse, the precision operator, to vectors from the function
space. Additionally, computation of realizations from the distribution
requires the ability to apply a square root of the covariance operator
to vectors.

Constructing covariance operators from elliptic PDE operators, which
has recently gained popularity \mycite{Stuart10,
  Bui-ThanhGhattasMartinEtAl13, LindgrenRueLindstroem11,
  RoininenHuttunenJanneEtAl14, IsaacPetraStadlerEtAl15}, allows
one to build on available fast PDE solvers for the required manipulations.
This leads to a correspondence between domain Green's functions of
PDE operators and covariance functions of the Gaussian random
fields. On bounded domains, PDE operators require the definition of
boundary conditions, which has implications for the resulting
covariance operators.
Namely, this can lead to increased/decreased correlation and pointwise
variance close to the boundary, which is usually undesirable
from a statistical perspective. This behavior is illustrated in
figure~\ref{fig:problem illustration} and has also been observed in
\mycite{Bui-ThanhGhattasMartinEtAl13, LindgrenRueLindstroem11,
  RoininenHuttunenJanneEtAl14}. 
In this work, we present methods to ameliorate these boundary effects.
Since we aim at large-scale problems, we are interested in scalable
optimal complexity algorithms that avoid dense matrix
operations or matrix assembly.  Our target is to find domain Green's functions that are
as similar as possible to the free-space Green's functions of the
precision operator, which are Mat\'ern covariance functions.  We
present two methods towards achieving this objective.

The first method combines the PDE operator with a
homogeneous Robin boundary condition 
$\beta u +\frac{\partial u}{dn} = 0$,
with a varying Robin coefficient 
$\beta=\beta(\x)$. This coefficient function is derived as solution to an
optimization problem that aims at making the difference between the
domain and the free-space Green's functions small. Our approach exploits
the definition of the domain Green's function and uses the fact that explicit
expressions for the free-space Green's functions are available or
can easily be computed numerically. 
For one-dimensional domains, $\beta$ can be chosen such that the
effect of boundary conditions vanishes completely. For two- and
three-dimensional domains, $\beta$ can be chosen to minimize boundary
effects in an averaged sense. The approach only requires computation
of inner products and is thus feasible for large-scale problems.

The second method we propose amounts to a rescaling of
the covariance operator $\covop$ that is constructed from elliptic PDE
operators. It can be combined with the approach discussed above.
This rescaled operator has constant pointwise variance (a property that
$\covop$ above does not have).
The idea is most easily understood in finite dimensions:
For a covariance matrix $\Sigma$, with diagonal $D_{ij} :=
\Sigma_{ij} \delta_{ij}$, the rescaled matrix $\Sigma'=D^{-\frac12}
\Sigma D^{-\frac12}$ is also a covariance matrix, and it has a constant
unit diagonal.

\begin{figure}
  \minipage{0.52\textwidth}
  \begin{tikzpicture}[thick,scale=.85, every node/.style={scale=0.99}]
    \begin{axis}
      [
      xmin = 0,
      xmax = 0.5,
      xlabel = {$s$},
      ylabel = {$c(\x (s), \x^\star)$},
      ymin   = 0,
      compat = 1.3,
      ytick = \empty,
      legend cell align=left
      ]
      \draw[black!30!white, thin] (50,0) -- (50,130);
      \addplot [thick, black, mark=none] table {square_Free_Space_Greens_Function.txt};
      \addlegendentry{\large Free-Space};
      \addplot[thick, blue, mark=none, dashed] table {square_Dirichlet_Greens_Function.txt};
      \addlegendentry{\large Dirichlet BC};
      \addplot[thick, red, dotted, mark=none] table {square_Neumann_Greens_Function.txt};
      \addlegendentry{\large Neumann BC};
    \end{axis}
    \begin{axis}
      [
      compat=1.3,
      axis lines = none,
      xmin = -0.1,
      xmax = 1.1,
      ymin   = -0.1,
      ymax   = 1.1,
      xtick = \empty,
      ytick = \empty,
      height = 4.5cm,
      width = 4.5cm,
      at={(3.9cm,1.2cm)},
      legend cell align=left
      ]
      \addplot [thick, black!50!white, mark=none, fill=black!10!white] table {square_vertices.txt};
      \addplot [thick, red  , mark=none] table {square_line.txt};
      \addplot [only marks, mark = *, mark size=1.2] table {square_source.txt};
      \node at (2.4cm,0.5cm) {$\Omega$};
      \node at (0.6cm,1.8cm) {$\x^\star$};
      \draw [black, thin] (0.35cm,1.45cm) -- (0.45cm,1.65cm);
      \draw [black, thin] (0.6cm,1.45cm) -- (0.8cm,1.2cm);
      \node at (1.4cm,1.1cm) {cross section};
    \end{axis}
  \end{tikzpicture}
  \endminipage\hfill
  \minipage{0.48\textwidth}
  \includegraphics[width=\linewidth, trim=0 0 0cm 0, clip]{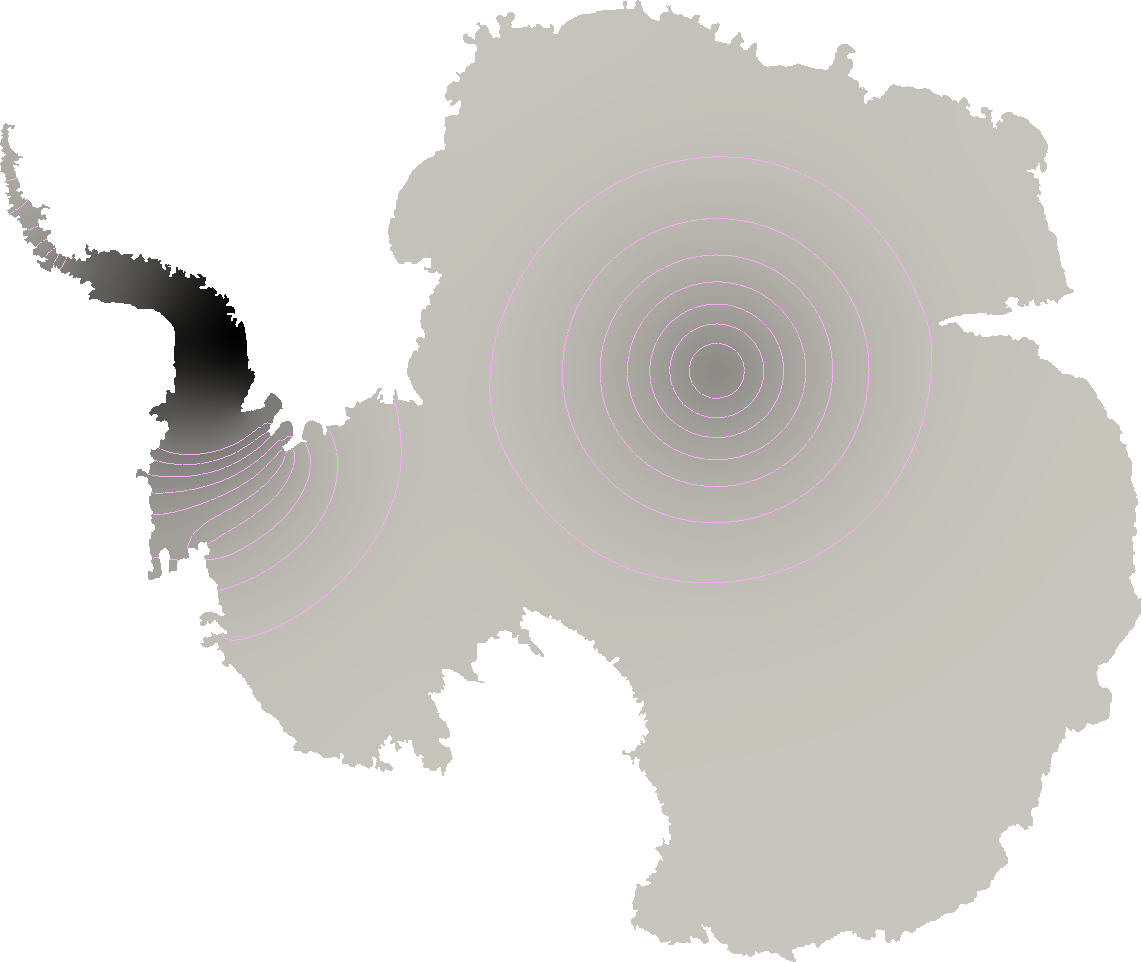} 
  \endminipage\hfill
  \caption{Left: Cross sections through covariance functions induced
    by elliptic PDE operators with different boundary
    conditions. Shown is also a sketch of the domain $\Omega=[0,1]^2$
    and the cross section $\x(s) = (s, 0.5)^T$.  The center is located
    at $\x^\star = \x(0.05) = (0.05, 0.5)^T$.  Right: Two covariance
    functions on the Antarctica domain (see
    Sec.~\ref{subsec:Antarctica}). The magnitude of the left
    covariance function exceeds the gray scale used to show the
    covariance between the centers and the points of the domain. The
    discrepancy between the covariance is due to the use of Neumann
    boundary conditions for the differential operator.}
  \label{fig:problem illustration}
\end{figure}

\subsection*{Related Work}
In spatial statistics, the use of covariance operators is motivated 
by the need for fast computations \mycite{SimpsonLindgrenFinnEtAl12s,
SimpsonLindgrenFinnEtAl12}. The connection
between inverse elliptic operators and Gaussian fields was originally
established in \mycite{Whittle63}. Building on this connection 
and results in \mycite{Besag81}, the authors
of \mycite{LindgrenRueLindstroem11} show that discretizing an inverse
elliptic covariance operator is valid, from a statistical
perspective. This results in a (discretely-indexed) Gaussian field
with a sparse precision operator due to the locality of
differential operators. This sparsity allows for fast application of
the precision. Fast application of the covariance operator is possible
building on fast elliptic PDE solvers.

A parallel approach aiming at Bayesian inverse problems was
established by Stuart \mycite{Stuart10}. Contrary to
\mycite{LindgrenRueLindstroem11}, the author's motivation is to develop
the theory of Bayesian inference in function spaces.  The advocated
approach (which we follow) is that all algorithms should be
presented and studied in function spaces, i.e., in infinite
dimensions. Taking this approach, the author is lead to the use of
``Laplacian-like'' precision operators \mycite[Assumption
  2.9]{Stuart10}, which are used to define Gaussian priors for
Bayesian inverse problems. In some respect, \mycite{Stuart10} and
\mycite{LindgrenRueLindstroem11,SimpsonLindgrenFinnEtAl12s} draw similar
conclusions. They argue that using covariance operators is
superior to using covariance functions, both from a theoretical
as well as from a computational perspective.

The role of PDE-operator boundary conditions if the domain $\Omega
\subseteq \mathbb{R}^d$ is bounded was already observed to cause
variance inflation near the boundary in
\mycite{LindgrenRueLindstroem11}. To avoid this effect, domain
extension was proposed in
\mycite{LindgrenRueLindstroem11,SimpsonLindgrenFinnEtAl12s,RoininenHuttunenJanneEtAl14}.
As an alternative and related to one of the methods we propose in this
paper, in \mycite{RoininenHuttunenJanneEtAl14} the authors propose to
use a Robin boundary condition of the form $\beta u + \frac{\partial
  u}{\partial n} = 0$. They conduct numerical experiments to
empirically find a constant, boundary effect mitigating coefficient
$\beta$ in the Robin condition for a two-dimensional circular
domain. In \mycite{CalvettiKaipioSomersalo06} the authors suggest
sampling values on the boundary according to the correct distribution
and then using these values as Dirichlet data for the domain. This
approach is technical in higher dimensions and it requires assembled
matrices.

\subsection*{Contributions} The main contributions of this work are as
follows: (1) The proposed methods mitigate boundary
effects arising in continuously indexed Gaussian random fields when
elliptic PDE operators are used to construct covariance operators.
They are computationally feasible, do not require assembled
matrices, nor the extension of the computational domain.
(2) We present simple and fast algorithms for the
approximation of the quantities used in our methods. Once these
upfront computations, which depend on the domain and the PDE
operator, are available, all remaining computations (covariance and precision
application, and computation of samples) are as efficient as in the
original method.
(3) We perform a comprehensive numerical study of the proposed
methods on simple as well as complex geometries, such as the Antarctica
domain from \mycite{IsaacPetraStadlerEtAl15}.

\subsection*{Limitations} We also remark limitations of our methods.
(1) To compute the optimal Robin coefficient, an integration over the
domain must be performed for each point on the boundary, where the Robin coefficient is
needed.\footnote{These points depend on the
numerical method used. For the finite element method, for instance,
values of the Robin coefficient are usually needed at boundary
quadrature points.} However, this integration can be accelerated by realizing
that in many cases the Robin coefficient varies smoothly, and hence
one may use interpolation between adjacent points.  Moreover, the
integrands decay rapidly and thus the integration can be restricted to
a small part of the domain. (2) Computation of the integrals in
the Robin method can be challenging due to the singularity 
of the integrands. As a remedy, we discuss approximations 
that allow computation of these integrals at an
accuracy that suffices for our purposes.
(3) For the variance normalization method, we
require knowledge of the pointwise variance over the
domain. Fortunately, this field is often smooth and thus interpolation
from a few points to the entire domain is possible.  Additionally, one
can leverage potential symmetries in the geometry to speed up the
computation of the pointwise variance.
(4) The upfront computations our methods require depend on the PDE
operator used to define the covariance operator. If one uses
a hierarchical method in which the PDE
operator varies, the proposed approach can become computationally
expensive.

\section{Preliminaries}\label{section:preliminaries}
Let $\Omega \subseteq \mathbb{R}^d$, $d=1,2,3$ be a bounded open
domain with piecewise smooth boundary $\partial \Omega$.  Throughout
this paper, we are concerned with Gaussian measures over spaces of
functions defined on $\Omega$. We first recall definitions of
Gaussian measures and Gaussian fields---see, e.g., \mycite{Hairer09,Prato06} for
details. 

\subsection{Gaussian measures}\label{subsec:gaussian measures}
Let $\mu$ a measure on a separable Banach space $X$ and $u \sim
\mu$. We say $\mu$ is Gaussian if $\forall \ell\in X^{*}$, if there
exist $m_{\ell}$ real and $\sigma_{\ell}$ non-negative such that
$\ell(u)\sim \mathcal{N}( m_{\ell}, \sigma_{\ell} )$ is
Gaussian. Consider $X = C(\Omega)$, the space of continuous functions
on $\Omega$ with the $\sup$-norm, so that $X \subseteq L^2(\Omega)$.
Taking this view, one can specify a Gaussian measure $\mu$ on $X$ by
first taking a mean $m \in X$ and a (linear) self-adjoint positive
definite trace class covariance operator $\covop: L^2(\Omega)
\to L^2(\Omega)$. Since samples $\mathcal{N} (m ,\covop)$ are
continuous for the choices of $\covop$ we consider below, $X$
has full measure and by \mycite[Ex. 3.39]{Hairer09} we have a Gaussian
measure on $C(\Omega)$. We still write $\mathcal{N}(m, \covop)$ for the
corresponding Gaussian measure on $X$. If $h\in L^2(\Omega)$ is
discontinuous, then $\covop^{-1} h $ is empty and $\langle h,
\covop^{-1} h \rangle = \| \covop^{-\frac12}h \| = \infty$,
\emph{informally} making the likelihood of observing $h$ zero. Thus,
the Gaussian measure gives full measure to $X$.

\subsection{Gaussian random fields}\label{subsec:gaussian fields}
For our purposes, a Gaussian random field is a random function
$u:\Omega \to \mathbb{R}$ such that for all finite sets $\{ \x_i
\}_{i=1}^n \subseteq \Omega$, the random vector $(u(\x_1), ... ,
u(\x_n))^{T}$ is a multivariate normal. For simplicity, here we
consider a centered field, i.e., $m(\x) := \mathbb{E}[u(\x)]\equiv 0$.
The corresponding covariance function $c: \Omega \times \Omega \to
\mathbb{R}$ is defined as $c(\x,\y) := \mathbb{E}[u(\x)u(\y)]$. A
covariance function can also be used as a kernel for an integral
operator. The resulting operator is given by
\begin{align*}
  (\covop f )(\x) = \int_{\Omega} c(\x,\y) f(\y) d\y.
\end{align*}
If $c$ is positive-definite \mycite{Varadhan01}, $\covop$ is a valid
covariance operator in the sense of section \ref{subsec:gaussian
  measures}.  This connection motivates considering $\covop$ to be an
inverse elliptic operator, making $c$ the Green's function of that
operator.  In this case, writing $c(\x, \y) = (\covop \delta_{\x}
)(\y)$ is well-defined from a PDE perspective and we use this identity
below.  Now, the connection with Gaussian measures is
straightforward---a Gaussian random field defines a Gaussian measure
on $C(\Omega)$.

\subsection{Inverse elliptic covariance operators}
On $\Omega$, consider the elliptic differential operator 
\begin{equation}\label{eq:operator}
\Op := -\gamma \Delta + \alpha
\end{equation}
with constants $\gamma, \alpha > 0$. The domain on which $\Op$ is
defined depends on the choice of boundary conditions. We will discuss
different domains $\dom(\Op)$ and the implied properties for
covariance operators derived from $\Op$.  We assume that $\Omega$ is
such that $\Op$ is a Laplacian-like operator in the sense of
\mycite[Assumption 2.9]{Stuart10} when equipped with homogeneous
Dirichlet, Neumann or Robin boundary conditions. The operator
$\Op^{-p}$ is a valid covariance operator for $p>d/2$, with samples
that are $s$-H\"older continuous for all $s < \min \{1,p-d/2\}$. The
covariance function of the free-space operator has a characteristic
length of $\sqrt{8(p-d/2} \sqrt{\gamma/ \alpha}$ meaning that at that
distance away from a source $\x$, the covariance decays to $0.1$ of
its maximal value (attained at $\x$)
\mycite{LindgrenRueLindstroem11}. Specifically, $\Op^{-1}$ is a
covariance operator for $d=1$ and $\Op^{-2}$ is a covariance operator
for $d=1,2,3$ \mycite{Stuart10}. The boundary conditions of $\Op^2$
are inherited from the boundary conditions of $\Op$, which we denote
by $\text{BC}(\cdot)=0$, i.e., $u = \Op^{-2}f$ is defined as the
solution of the mixed system
\begin{align}
  \begin{split}
    \Op v 
    &= f \qquad
    \text{ in } \Omega, \\
    \text{BC}(v)
    &= 0 \qquad
    \text{ on } \partial \Omega, \\
    \Op u
    &= v \qquad
    \text{ in }\Omega, \\
    \text{BC}(u)
    &= 0 \qquad
    \text{ on } \partial \Omega.
  \end{split}
\end{align}
This implies that $\Op^{-1}$ is a square root of $\Op^{-2}$. This
choice of boundary conditions renders sampling from a centered
Gaussian with covariance operator $\Op^{-2}$, $\mathcal{N}(0,
\Op^{-2})$, straightforward. Namely, samples are generated as $u \sim
\Op^{-1}\mathcal{W}$ where $\mathcal{W}$ is white noise, and this can
be interpreted in infinite dimensions---see
\mycite{LindgrenRueLindstroem11, Whittle63}.  The key property is, as
hinted in section \ref{subsec:gaussian fields}, that the Green's
function, $G_p$, of $\Op^{p}$ with appropriate boundary conditions is
the covariance function of a Gaussian measure with covariance operator
$\Op^{-p}$.  Specifically, let $u \sim \mathcal{N}(0, \Op^{-p})$. Then
$G_p(\x,\y) = \mathbb{E}[u(\x)u(\y)]$.  The covariance function $G_p(
\cdot, \cdot )$, however, depends strongly on the boundary condition
of $\Op$, as can be seen in figure~\ref{fig:problem illustration}.

\subsection{Causes of boundary effects}
The reason for these boundary effects can be understood from either
PDE theory or from probability theory. To illustrate the PDE
perspective, consider the covariance operator $\Op^{-1}$ on $\Omega :=
[0,1]$ with homogeneous Dirichlet boundary conditions, and $\x \in
\Omega, \y \in \partial \Omega$. Then that $G(\x, \y) = 0$, since
$G(\x,\cdot)$ has to satisfy the boundary condition. By continuity,
Green's function is small near the boundary, even if $\y$ is only
close to the boundary. So for a Gaussian field $u \sim
\mathcal{N}(0,\covop)$ and $\x,\y\in \Omega$ near the boundary, $\cov(
u(\x), u(\y) )$ is smaller than what it would have been without the
boundary condition. To illustrate the probabilistic perspective,
consider the same operator $\Op$ and domain and $\gamma = 1$, only
with homogeneous Neumann boundary conditions. Loosely speaking, the
Green's function $G_1(\x,\y) = \left ( \Op^{-1}\delta_{\x}\right )
(\y)$ is the amount of time a particle spends near $\y$, given that it
started its walk at $\x$, if it is killed at rate $\kappa:=
\sqrt{\alpha / \gamma} = \sqrt{\alpha}$ \mycite{Oksendal03}. Then the
Green's function
\begin{equation*}
  G_2(\x,\y) = \left ( \Op^{-2}\delta_{\x}\right ) (\y) =
  \int_{\Omega} G_1(\x,\z)G_1(\z,\y)d\z
\end{equation*}
is interpreted as the amount of time a branching particle spends at
$\y$, had it started at $\x$ and if it is killed at the same rate
$\kappa$. The Neumann boundary means the particle reflects off the
boundary upon hitting it.\footnote{The probabilistic interpretation
  of Robin boundary conditions is involved---we refer to
  \mycite{SingerSchussOsipovEtAl08} for
  a numerical study.} Since the particle is reflected off the
boundary, it spends more time near it, making $G(\x,\y )$ large near
the boundary. Thus, the opposite happens --- the covariance is larger
near the boundary than what it would be without the boundary.  These
boundary effects ($G(\x, \cdot )$ is too big or too small near the
boundary) can be undesirable from a statistical modeling point of
view.  In the next section, we review approaches based on extending
the domain, and in sections \ref{section:robin} and
\ref{section:variance} we present two novel methods to mitigate these
boundary effects.

\section{Extending the domain}
The presented problem has a seemingly appealing solution---considering
an extended open domain $\Omega' \supset \Omega$ with sufficiently
regular boundary $\partial
\Omega'$, which is far enough from $\Omega$ that boundary
effects arising from $\partial\Omega'$ are negligible in $\Omega$.
In this section, we present variants of this approach and discuss challenges that arise
for large-scale problems.

Recall that we are particularly interested in scalable algorithms for
the application of the (discretized) covariance operator, its inverse
and its square root to vectors.  Before discussing concrete methods
that are based on domain extension, some comments are in
order.  First, extending the domain may result in undesired
correlations between parts of the domain. An extreme example would be
a domain which consists of two disjoint subdomains. In such a case, a
connected domain $\Omega'$ that encompasses these subdomains inevitably
introduces correlations between them. Second, creating an extended
domain $\Omega'$ comes at a cost, both in terms of development
time and computing time. For instance, it might require to extend a given
mesh for $\Omega$ to a mesh for the extended domain $\Omega'$, and
to manage the increased overall number of unknowns of the
problem. 

Let us start with introducing some notation. We consider the
covariance and the precision operators $\Op'^{-2}$ and $\Op'^2$,
respectively. Here, $\Op'$ is an elliptic PDE operator defined over
$\Omega'$, which incorporates, for instance, homogeneous Neumann or
Dirichlet boundary conditions at
$\partial\Omega'$. Assume we are given a discretization (e.g., based
on finite elements or finite differences) for functions defined over
$\Omega$, which we extend to a discretization of functions defined
over $\Omega'$.
We denote the number of degrees of freedom of the discretization for
functions defined over $\Omega'$ by $n$, and assume that the
corresponding unknowns are ordered such that the first $n_1$ unknowns
correspond to points that are inside or on the boundary of
$\Omega$. The remaining $n_2=n-n_1$ unknowns correspond to points in
$\Omega^c$, the domain extension.
This implies the following block structure of the covariance and 
precision matrices $\Sigma', Q'\in \mathbb R^{n\times n}$, respectively.
\begin{equation}\label{eq:block_matrices}
  \Sigma' =
  \begin{pmatrix} 
    \Sigma'_{11} & \Sigma'_{12} \\
    \Sigma'_{21} & \Sigma'_{22}
  \end{pmatrix},\qquad
    Q' =
  \begin{pmatrix} 
    Q'_{11} & Q'_{12} \\
    Q'_{21} & Q'_{22}
  \end{pmatrix}.
\end{equation}

In this setting, $\Sigma'_{11}\in \mathbb R^{n_1\times n_1}$ can be
used as covariance matrix for unknowns corresponding to points inside
$\Omega$. Note that the matrices $\Sigma',Q'$ in
\eqref{eq:block_matrices} might not be available in assembled form,
and we might only be able to apply them to vectors. The application of
the blocks to vectors can then be computed efficiently by appropriate
padding of vectors with zeros, followed by truncation. To be
precise, we denote by $P_1:\mathbb R^n\to \mathbb R^{n_1}$ and
$P_2:\mathbb R^n\to \mathbb R^{n_2}$ the (Boolean) operators that
restrict vectors to their first $n_1$ and to their last $n_2$
components, respectively. The corresponding adjoint operators
$P_1^*:\mathbb R^{n_1}\to \mathbb R^{n}$ and $P_2^*:\mathbb R^{n_2}\to
\mathbb R^{n}$ are padding-by-zero operators.  For instance, for $\bs
v \in \mathbb{R}^{n_1}$, we can efficiently compute $\Sigma'_{11}\bs
v$ as $ P_1\Sigma'P_1^*\bs v$. 

The precision for unknowns corresponding to points in $\Omega$ is
found as the Schur complement, \mycite{RoininenHuttunenJanneEtAl14}
\begin{align}
  \Sigma'^{-1}_{11} &= Q'_{11} -
  Q'_{12}Q'^{-1}_{22}Q'_{21}. \label{eq:schur}
\end{align}
Hence, fast application of the precision $\Sigma'^{-1}_{11}$ to
vectors requires that we can apply $Q'^{-1}_{22}$ efficiently. The
ability to do this depends on the specific choice of the
discretization, and we discuss some special cases
next. Additionally, we discuss options for applying the square root of
the covariance operator, $\Sigma'^{1/2}_{11}$, to vectors, as is
required for computing sample realizations from Gaussian distributions
with covariance matrix $\Sigma'_{11}$.

\subsection{Domain extension from \mycite{RoininenHuttunenJanneEtAl14}}
First, we summarize the approach proposed in
\mycite{RoininenHuttunenJanneEtAl14}, where the authors use finite
differences (for simple geometries) or finite elements (for more
complicated geometries) to discretize elliptic operators defined on
$\Omega'$. They assume that the matrices \eqref{eq:block_matrices} are
available in assembled form, which allows them to apply $Q'^{-1}_{22}$
using standard solvers for positive definite sparse matrices that are
available in assembled form. For computing samples, which requires a
square root of $\Sigma'_{11}$ or its inverse, they assume a
factorization $Q' = L^TL$, which is reasonable as we assumed that the
precision is the product of two elliptic PDE operators.  Computing
samples from the distribution is then carried out by defining
\begin{align*}
  \tilde{L}_{1} &= L_{1} - L_{2}Q'^{-1}_{22}Q'_{21}, 
\end{align*}
with $L_1= LP_1^*\in \mathbb{R}^{n \times n_1}$ are the first $n_1$
columns of $L$ and $L_2= LP_2^*\in \mathbb{R}^{n \times n_2}$ are the
last $n_2$ columns.
With a short calculation, one
verifies that
$\tilde{L}_1^T \tilde{L}_1 = \Sigma'^{-1}_{11}$. Thus, we can obtain
samples from $\mathcal{N}(0, \Sigma'_{11})$ by solving
$\tilde{L}_1^{T}\bs u = \bs z$, where $\bs z \sim \mathcal{N}(0, I_1)$,
with the $n_1\times n_1$-identity matrix $I_1$.
However, $\tilde{L}_1$ might not be sparse as $Q'^{-1}_{22}$ is in
general dense. Thus, having to solve a linear system with
coefficient matrix $\tilde{L}_1$ might not be feasible for large $n$.

\subsection{Modfication of approach from \mycite{RoininenHuttunenJanneEtAl14}}
A modification of the approach taken in
\mycite{RoininenHuttunenJanneEtAl14} is to start by defining how
samples from $\mathcal{N}(0, \Sigma'_{11})$ are generated.  Namely,
let $\bs u \sim P_1 L^{-1} \bs z$, with $\bs z$ a finite element
approximation to white noise. Then, a short argument yields that $\bs
u \sim \mathcal{N} (0,\Sigma'_{11})$ and using a decomposition
analogous to \eqref{eq:schur}, we may recover the corresponding
precision. Provided systems with $L$ can be solved
efficiently,\footnote{For finite element discretizations, proper
  definition of the covariance factor $L$ includes, besides an
  elliptic solve, a mass matrix square root
  \mycite{RoininenHuttunenJanneEtAl14, Bui-ThanhGhattasMartinEtAl13},
  which can make the efficient application of $L$ challenging.} this
provides a method to compute samples and to apply the covariance
matrix to vectors that does not require assembled matrices. However,
now the bottleneck is in the need to apply the precision which
requires the block $Q'^{-1}_{22}$ that is usually not available unless
one assembles the matrix $Q'$. One possibility is to use an iterative
method, such as the conjugate gradient method, to solve systems with
$Q'_{22}$. However, unless an efficient preconditioner for this solve
is available, this can require large numbers of iterations, as can be
seen in the following section.

\subsection{Domain extension and Fourier bases}
So far, we have considered approaches based on local discretizations
for the elliptic operator $\Op$. As an alternative, we may consider
using a global basis, such as a discretization based on the
(non-uniform) fast Fourier transform (FFT), which allows fast
application of $\Op'$ and $\Op'^{-1}$.  This requires the extended
domain ${\Omega'}$ to be a box, and enables fast application of
$\Sigma'$ and $Q'$ without requiring these matrices in assembled form.
However, similar as above, we do not have access to $Q'^{-1}_{22}$
because we cannot extract and invert the submatrix $Q'_{22}$. As
discussed above, one option would be to solve systems with $Q'_{22}$
iteratively. However, this requires a large number of iterations, making the
method inefficient in practice.

\subsection{Practical aspects}
Each of the approaches discussed above has limitations for large-scale
problems. Sampling using the method suggested in
\mycite{RoininenHuttunenJanneEtAl14} requires assembling and solving a
dense system. While our modification provides a fast method to compute
samples, it either requires matrix assembly or an iterative Krylov
method for applying the precision operator.  Using a global Fourier
basis on a rectangular domain extension requires an iterative method
for applying the precision operator as well. We found this to take a
large number of iterations that has to be performed each time the
precision is applied. The methods we proposed in the next section
require some upfront computation to estimate an optimal Robin
coefficient or the pointwise variance field. After this step, all
computations with the covariance operator can be performed
efficiently and without requiring assembled matrices.

\section{Optimal Robin boundary conditions}\label{section:robin}
In this section we aim at finding Robin boundary conditions that
mitigate the undesirable boundary effects shown in
figure~\ref{fig:problem illustration}. We derive a coefficient in the
Robin condition such that the Green's functions (which are
also the covariance functions) of the domain are close to the free
space Green's functions.

For $\x \in \Omega$, we denote the free-space Green's functions for
$\Op^p$, centered at $\x$, by $\Phi_{p}(\x,\cdot)$, $p=1,2$. Their
explicit expressions depend on the dimension $d$ of the domain,
see appendix~\ref{subsec:explicit}. For
$p>d/2$ these free-space Green's function are known as Mat\'ern
covariance functions.

The corresponding domain Green's functions with Robin boundary
condition are denoted by $G_{p}( \x, \cdot ), p =  1,2$, and they
satisfy:
\begin{subequations}
  \begin{alignat}{2}
    \Op G_1 
    &= \delta_{\x} 
    &&\quad \text{ in } \Omega,\\
    \beta G_1 + \frac{\partial G_{1} }{\dn} 
    &= 0 
    &&\quad \text{ on } \partial \Omega, \\
    \Op G_2 
    &= G_1 
    &&\quad \text{ in }\Omega, \\
    \beta G_2 + \frac{ \partial G_{2} }{ \dn } 
    &= 0 
    &&\quad \text{ on } \partial \Omega,
  \end{alignat}
\end{subequations}
where $\delta_\x$ is the Dirac-delta function centered at $\x\in
\Omega$, and $\beta:\partial \Omega\to \mathbb R_{\geq 0}$ is a
non-negative function defined for all boundary points $\y \in \partial
\Omega$. Following \mycite{Evans10}, we refer to the difference between
the free-space and the domain Green's functions,
\begin{equation}\label{eq:corrector}
  \phi^{\x}_{p} := \Phi_{p}( \x, \cdot ) - G_{p}( \x, \cdot )\quad\text{
    for }  p = 1,2,
\end{equation}
as the correctors. These correctors satisfy
the following equations:
\begin{subequations}\label{eq:corrector system}
  \begin{alignat}{2}
    \Op \phi^{\x}_1 
    &= 0  
    &&\qquad\text{ in } \Omega, \label{eq:corrector system1}\\
    \beta \phi^{\x}_1 + \frac{ \partial \phi^{\x}_1 }{ \dn } 
    &= \beta \Phi_1(\x,\cdot ) + \frac{ \partial \Phi_1(\x, \cdot) }{ \dn } 
    &&\qquad \text{ on } \partial \Omega, \label{eq:corrector system2}\\
    \Op \phi^{\x}_2 
    &= \phi^{\x}_1  
    &&\qquad\text{ in } \Omega, \label{eq:corrector system3}\\
    \beta \phi^{\x}_2 + \frac{ \partial \phi^{\x}_2 }{ \dn } 
    &= \beta \Phi_2(\x, \cdot) + \frac{ \partial \Phi_2(\x, \cdot) }{ \dn } 
    &&\qquad\text{ on }  \partial \Omega. \label{eq:corrector system4}
  \end{alignat}
\end{subequations}
From \eqref{eq:corrector system}, it can be seen that if the right hand
sides in \eqref{eq:corrector system2} and \eqref{eq:corrector system4} were to
vanish everywhere on $\partial \Omega$, the correctors $\phi^{\x}_p \equiv 0$
and thus $\Phi_p(\x,\cdot) = G_p(\x,\cdot)$ for $p=1,2$.  If these
were to vanish for all $\x\in \Omega$, then $\Phi_{p} = G_{p}$ for
$p=1,2$. In the remainder of this section, we present
an optimization problem for the Robin coefficient $\beta(\y),
\y\in \partial \Omega$ that aim at making the boundary right hand
sides in \eqref{eq:corrector system2} and \eqref{eq:corrector system4} small,
and thus $\Phi_p \approx G_p$.

\subsection{One-dimensional case}\label{subsec:1D}
In one dimension, both $\Op$ and $\Op^2$ with appropriate boundary
condition are valid precision operators \mycite{Stuart10}. For $\Op$, we
only have to consider the system \eqref{eq:corrector system1} and
\eqref{eq:corrector system2}. The one-dimensional free-space Green's
function appearing in \eqref{eq:corrector system2}, is $\Phi_1(x,y) =
\frac{\exp( -\kappa |x - y| )}{2\kappa \gamma}$, with
 $\kappa = \sqrt{ \alpha / \gamma}$.  It can be verified that
for $\beta := \kappa$, the right hand side of 
\eqref{eq:corrector system2} vanishes. Thus, $G_1 = \Phi_1$, i.e.,
the domain and the free-space Green's functions coincide.

If one considers $\Op^2$ as covariance, this choice of $\beta$ does
\emph{not} guarantee that $G_2 = \Phi_2$. While for $\beta=\kappa$,
the right hand sides in \eqref{eq:corrector system2} and
\eqref{eq:corrector system3} vanish, the right hand side in the boundary condition
\eqref{eq:corrector system4} does not. Thus, for $\Op^2$ one should choose
a different value for $\beta$ following the approach presented in
section~\ref{subsec:beta} below.

\subsection{Higher dimensions}\label{subsec:beta}
We consider the precision operator $\Op^2$ and propose an optimization
approach for deriving an optimal Robin coefficient. As discussed
above, we would like to make $\beta ( \y )\Phi_p(\x,\y ) + \frac{
  \partial \Phi_p(\x, \y) }{ \dn }, p=1,2$ (the right hand sides of
\eqref{eq:corrector system2} and \eqref{eq:corrector system4})
vanish. For a fixed $\x \in \Omega$ and $\y\in \partial \Omega$, we
may choose, as a compromise, $\beta = \beta( \y )$ to be the average
of the roots of these terms. Note that both terms are linear in
$\beta$ with positive slopes $\Phi_p(\x,\y), p=1,2$. Recall that a
convex parabola attains its minimum value at the mean of its two
roots.  Thus, the parabola (in the variable $\beta$)
\begin{equation*}
  \left (\beta \Phi_1(\x,\y ) + \frac{ \partial \Phi_2(\x, \y) }{ \dn } \right )
  \left (\beta \Phi_2(\x,\y ) + \frac{ \partial \Phi_p(\x, \y) }{ \dn } \right )
\end{equation*}
attains its minimum in the average of its roots. Thus, for a fixed
$\x\in\Omega, \y \in \partial \Omega$, the minimum of the parabola may serve as
a compromise between the two competing terms. However, this compromise is made 
for a single $\x \in \Omega$. In order to take into account all
$\x \in \Omega$, we average, recovering the following optimization problem:
\begin{equation}\label{eq:beta-opt}
  \beta(\y ) := \argmin_{\beta \ge 0} 
  \frac{1}{|\Omega|}\int_{\Omega}
  \left(\beta \Phi_1(\x,\y) + \frac{\partial \Phi_1}{\dn}(\x,\y)
  \right)   \left(\beta \Phi_2(\x,\y) + \frac{\partial \Phi_2}{\dn}(\x,\y)
  \right) d\x.
\end{equation}
This quadratic minimization problem can be solved easily for $\beta$, noting that
the constraint $\beta \geq 0$ can be enforced on the solution. This leads to the
following expression for $\beta(\y)$:

\begin{equation}\label{eq:beta}
  \begin{split}
  \beta(\y) &:= \max(0,\tilde\beta(\y)),\\
    \tilde\beta(\y) 
    &= - \frac{\int_{\Omega}  \Phi_1(\x,\y) \frac{\partial \Phi_2}{\dn}(\x,\y) +
      \Phi_2(\x,\y) \frac{\partial \Phi_1}{\dn}(\x,\y) d\x}{2\int_{\Omega}
      \Phi_1(\x,\y) \Phi_2(\x,\y) d\x}.
  \end{split}
\end{equation}
Note that the integrals occurring in \eqref{eq:beta} are finite for
all dimensions $d=1,2,3$. Computing $\tilde\beta(\y)$ requires the
computation of two integrals over $\Omega$.  From the explicit
expressions \eqref{eq:explicit beta 2D}
and \eqref{eq:explicit beta 3D} for $\tilde\beta(\y)$, one can verify
that $\tilde\beta(\y) > 0$, if $\Omega$ is convex.

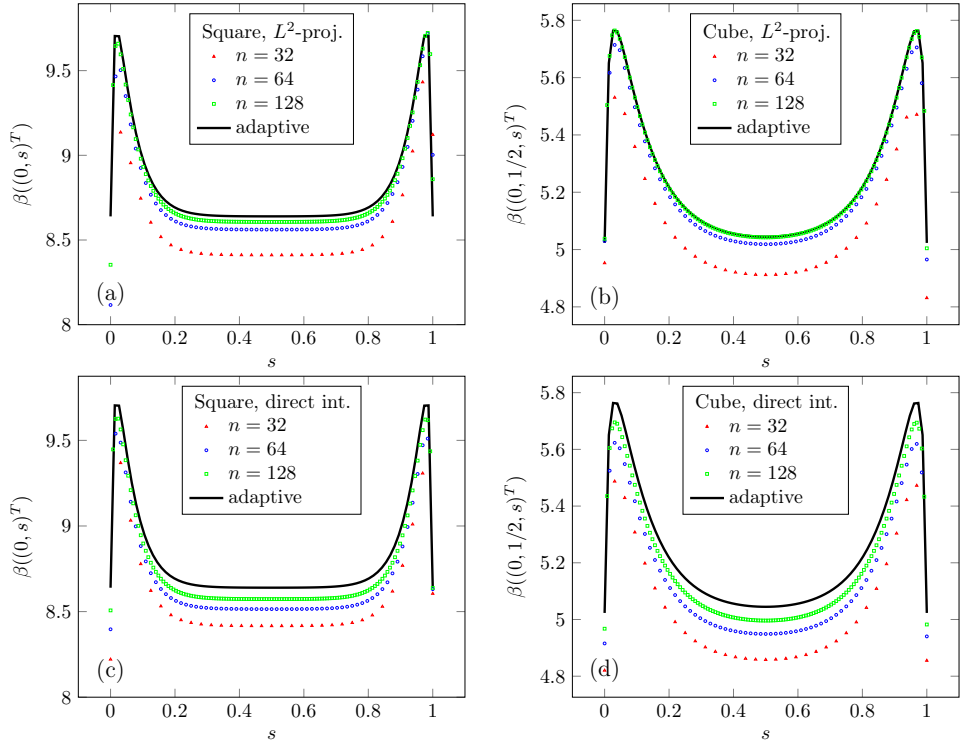
\begin{figure}
  \begin{tikzpicture}[thick,scale=0.75, every node/.style={scale=1}]
    \begin{axis}
      [
      compat = 1.3,
      ymin = 8,
      xlabel = {$s$},
      ylabel = {$\beta((0,s)^T)$},
      legend style={at={(0.5,0.97)},anchor=north},
      legend cell align=left,
      ]
      \addlegendimage{empty legend}
      \addlegendentry{\hspace{-.6cm}{Square, $L^2$-proj.}}
      \addplot
      [mark=triangle, red, only marks, mark size=.7]
      table {square_beta_radial_32.txt};
      \addplot
      [mark=o, blue, only marks, mark size=.7] table
      {square_beta_radial_64.txt};
      \addplot
      [mark=square, green, only marks, mark size=.7]
      table {square_beta_radial_128.txt};
      \addplot
      [very thick, mark=none, black] 
      table {square_beta_adaptive_2.txt};
      \node at (0, 50) {\Large (a)};
      \addlegendentry{$n=32$}
      \addlegendentry{$n=64$}
      \addlegendentry{$n=128$}
      \addlegendentry{adaptive}
    \end{axis}
  \end{tikzpicture}\hfill
  \begin{tikzpicture}[thick,scale=0.75, every node/.style={scale=1}]
    \begin{axis}
      [
      compat = 1.3,
      xlabel = {$s$},
      ylabel = {$\beta((0,1/2,s)^T)$},
      legend style={at={(0.5,0.97)},anchor=north},
      legend cell align=left
      ]
      \addlegendimage{empty legend}
      \addlegendentry{\hspace{-.6cm}{Cube, $L^2$-proj.}}
      \addplot
      [red, mark=triangle, only marks, mark size = .7]
      table {cube_beta_radial_32.txt};
      \addplot
      [blue, mark=o, only marks, mark size=.7] 
      table {cube_beta_radial_64.txt};
      \addplot
      [green,  mark=square, only marks, mark size=.7]
      table {cube_beta_radial_128.txt};
      \addplot
      [very thick, black, mark=none]
      table {cube_beta_adaptive_2.txt};
      \node at (0, 0) {\Large (b)}; 
      \addlegendentry{$n=32$}
      \addlegendentry{$n=64$}
      \addlegendentry{$n=128$}
      \addlegendentry{adaptive}
    \end{axis}
  \end{tikzpicture}\\
 \begin{tikzpicture}[thick,scale=0.75, every node/.style={scale=1}]
    \begin{axis}
      [
      compat = 1.3,
      ymin = 8,
      xlabel = {$s$},
      ylabel = {$\beta((0,s)^T)$},
      legend style={at={(0.5,0.97)},anchor=north},
      legend cell align=left
      ]
      \addlegendimage{empty legend},
      \addlegendentry{\hspace{-.6cm}{Square, direct int.}}
      \addplot
      [mark=triangle, red, only marks, mark size=.7]
      table {square_beta_std_32.txt};
      \addplot
      [mark=o, blue, only marks, mark size=.7] table
      {square_beta_std_64.txt};
      \addplot
      [mark=square, green, only marks, mark size=.7]
      table {square_beta_std_128.txt};
      \addplot
      [very thick, mark=none, black] 
      table {square_beta_adaptive_2.txt};
      \node at (0, 15) {\Large (c)}; 
      \addlegendentry{$n=32$}
      \addlegendentry{$n=64$}
      \addlegendentry{$n=128$}
      \addlegendentry{adaptive}
    \end{axis}
  \end{tikzpicture}\hfill
  \begin{tikzpicture}[thick,scale=0.75, every node/.style={scale=1}]
    \begin{axis}
      [
      compat = 1.3,
      xlabel = {$s$},
      ylabel = {$\beta((0,1/2,s)^T)$},
      legend cell align=left,
      legend style={at={(0.5,0.97)},anchor=north},
      ]
      \addlegendimage{empty legend},
      \addlegendentry{\hspace{-.6cm}{Cube, direct int.}}
      \addplot
      [red, mark=triangle, only marks, mark size = .7]
      table {cube_beta_std_32.txt};
      \addplot
      [blue, mark=o, only marks, mark size=.7] 
      table {cube_beta_std_64.txt};
      \addplot
      [green,  mark=square, only marks, mark size=.7]
      table {cube_beta_std_128.txt};
      \addplot
      [very thick, black, mark=none]
      table {cube_beta_adaptive_2.txt};
      \node at (0, 0) {\Large (d)}; 
      \addlegendentry{$n=32$}
      \addlegendentry{$n=64$}
      \addlegendentry{$n=128$}
      \addlegendentry{adaptive}
    \end{axis}
  \end{tikzpicture}
  \caption{Optimal Robin boundary coefficients $\beta$ for an edge of
    a square using $\Op = -\Delta + 121$ (a), (c) and a line on
    a face of a cube using $\Op = -\Delta + 25$ (b), (d). Shown
    are coefficients computed by adaptive quadrature, and their
    discrete approximations on regular meshes obtained by dividing
    $n^2$ squares into $4n^2$ triangles in two dimensions, and $n^3$
    cubes into $6n^3$ tetrahedra in three dimensions. The
    approximations are either based on approximate $L_2$-projections
    followed by finite element quadrature (a), (b) or on direct finite
    element quadrature (c), (d) as discussed in section
    \ref{subsec:practial_robin}.\label{fig:beta}}
\end{figure}

\subsection{Practical aspects}\label{subsec:practial_robin}
Numerical evaluation of the singular integrals in \eqref{eq:beta} is a
challenging task. We have used two practical approaches for computing
Robin coefficients in the context of finite element discretizations.

The first approach approximates the fundamental solutions with
piecewise constants, found by evaluating the fundamental solutions and
their derivatives at element centers. This avoids singularities and
the integrals in \eqref{eq:beta} for the resulting piecewise constant
functions can be computed exactly. Robin boundary coefficients
computed using this approach are shown in (c) and (d) in figure
\ref{fig:beta}. As the mesh is refined, the Robin coefficients
converge to the numerically accurate Robin coefficient, which is
obtained from adaptive quadrature \mycite{Cubature}.

Our second approach is motivated by the derivation of $\beta$ as
presented in section \ref{section:robin}, but for the
\emph{discretized} problem. We consider discrete approximations
$\Phi_1^h$ and $\Phi_2^h$ of the free-space Green's functions $\Phi_1$
and $\Phi_2$, and aim at solving the optimization problem
\eqref{eq:beta-opt} with these discrete Green's functions rather than
their continuous counterparts. Our motivation is that $\Phi_1$ and
$\Phi_2$ cannot be represented in finite dimensions and thus
the discrete domain Green's functions can never be good approximations
to the continuous free-space Green's functions. The best we can hope
for is that the numerically computed domain Green's functions
approximate discrete free-space Green's functions $\Phi_1^h$ and
$\Phi_2^h$.  Unless for uniform discretizations, $\Phi^h_1(\x,\cdot)$
and $\Phi^h_2(\x,\cdot)$ depend on the discretization mesh in a
neighborhood of $\x$ and thus would have to be computed for every
$\x$. To avoid this, and using the radial symmetry of Green's
functions, we compute a one-dimensional finite element approximation
to the free-space Green's function as illustrated next for $d=2$---an
analogous approach can be taken for $d=3$. Recall that for a radially
symmetric function $v$, we can use polar coordinates $(r,\phi)$ for the Laplacian
operator to write:
\begin{align*}
  \Delta v &= \frac{\partial^2 v}{\partial r^2} + \frac{1}{r}\frac{\partial v}{\partial r}.
\end{align*}
Hence, using a Dirac-delta $\delta_0$, we find the weak form:
\begin{align*}
  v(0) &= \int_{\mathbb{R}^2} v\delta_{0} \,d\x \\
  &= \int_{\mathbb{R}^2} v(-\gamma \Delta + \alpha)\Phi_1 \, d\x \\ 
  &= \int_{0}^{2\pi} \!\!\int_{0}^{\infty} 
  (-\gamma\frac{\partial^2 \Phi_1}{\partial r^2} -
  \gamma\frac{1}{r}\frac{\partial \Phi_1}{\partial r} + \alpha \Phi_1) v r \,dr d\theta \\
  &= 2\pi \int_{0}^{\infty} (\gamma \frac{\partial \Phi_1}{\partial r}
  \frac{\partial v}{\partial r} + \alpha \Phi_1v) r \,dr,
\end{align*}
where the last equality follows from integration by parts and radial
symmetry.  Now, we solve for $\Phi_1$ as a function of the radius $r$
using the finite element method in one dimension. The space
discretization length scale $h$ and the polynomial order for this
one-dimensional finite element calculation should be representative of
their higher-dimensional counterparts, such that the resulting
discrete free-space Green's functions can be used to compute the
optimal Robin coefficient functions for the discrete problem. We
truncate the integration over the radius to $[0,R]$, with $R$
sufficiently large such that the Neumann boundary condition imposed at
$r=R$ has negligible effect. To compute $\Phi_2$ as a function of $r$,
we solve the discretized problem twice.

The usual finite element quadrature can now be used for computing the
Robin coefficients, since the numerically computed free-space Green's functions are finite
element functions (or interpolations of radially symmetric
one-dimensional finite element functions to a two/three-dimensional
mesh). The results are shown in (a) and (b) in figure \ref{fig:beta}.
Moreover, Robin coefficients computed with these discrete free-space
Green's functions are (close-to) optimal for a discrete version of the
optimization problem \eqref{eq:beta-opt}, which is particularly
relevant for coarser discretizations, i.e., in the pre-asymptotic
regime.

\section{Normalizing the variance}\label{section:variance}
The approach presented in this section can be used to mitigate
boundary effects in covariance operators derived from elliptic PDEs
with Neumann or Robin boundary conditions. In particular, it can be
used in combination with the optimal Robin coefficient approach
introduced in the previous section.
Recall, that the correlation between two (real valued)
random variables $X,Y$ is defined as
\begin{align*}
  \corr(X,Y) := \frac{ \cov(X,Y)}{\sqrt{ \var(X) \var(Y) } }.
\end{align*}
Now, let us consider a Gaussian random field, $u$, which is defined
over $\Omega$ and has the covariance function $G_2$ with Robin or
Neumann boundary conditions, and a Gaussian random field, $v$, defined
over $\mathbb{R}^d$ with covariance function $\Phi_2$. Then,
\begin{alignat*}{2}
  \corr(u(\x) ,u(\y)) 
  &= \frac{ G_2(\x,\y) }{\sqrt{ G_2(\x,\x) G_2(\y,\y) } } 
  &&\qquad \text{ for } \x,\y \in \Omega,\\
  \corr(v(\x) ,v(\y)) 
  &= \frac{ \Phi_2(\x,\y) }{\sqrt{ \Phi_2(\x,\x) \Phi_2(\y,\y) } }
  &&\qquad \text{ for } \x,\y \in \mathbb{R}^d.\\
\end{alignat*}
A key property of $v$ is that
\begin{align}\label{eq:isotropic property}
  \Phi_2(\x,\x) = \cov(v(\x),v(\x)) = \var( v(\x) )  = \sigma^2\: \forall \x \in \mathbb{R}^d,
\end{align}
where $\sigma^2$ is a constant given explicitly in~\eqref{eq:sig2}.  
This means that the covariance of the field $v$ coincides with its correlation
(up to a multiplicative constant). This is a desirable property 
from a modeling point of view, as it means that
$v(\x)$ and $v(\y)$ vary on the same scale. Said differently, it is as 
likely to observe $v(\x)$ at a certain distance from its mean $\mathbb{E}[v(\x)]$
as it is to observe $v(\y)$ at the same distance from its mean $\mathbb{E}[v(\y)]$.
This is not the case, however, for $u$. A property similar to~\eqref{eq:isotropic property}
does not hold for $\var( u(\x ) )= G_2(\x,\x )$. This, as will be seen in
the numerical simulations, is a significant
part of the boundary effect illustrated in figure~\ref{fig:problem illustration}. 
The idea of the approach proposed in this section is to modify the covariance
operator $\Op^{-2}$ so that its Green's functions satisfy
\eqref{eq:isotropic property} (with the constant $\sigma^2$).%

Before presenting our method in function space, we consider its
simpler analogue in $\mathbb{R}^n$.  Consider a (symmetric positive
definite) covariance matrix $\Sigma \in \mathbb{R}^{n \times n}$ with
non-constant diagonal and define a diagonal matrix $D$ by $D_{ii} =
\sigma^{-1}\Sigma_{ii}$, with $\sigma>0$.  Let $\Lambda :=
D^{-\frac12}\Sigma D^{-\frac12}$ and $v \sim \mathcal{N}( 0, \Lambda
)$.  Then, \eqref{eq:isotropic property} holds for $v$ in the sense
that
\begin{align*}
   \Lambda_{ii} = \cov( v_i,v_i) = \var( v_i ) = \sigma^2, \text{ for } 1 \leq i \leq n.
\end{align*}
The covariance operator modification presented below is the
infinite-dimensional analogue to the computation of $\Lambda$.

Consider $\Op$ as in section \ref{section:preliminaries}, equipped
with a homogeneous Robin boundary condition $\beta u +\frac{\partial
  u}{\dn} = 0$ with $\beta:\partial \Omega \to \mathbb{R}_{\ge 0}$
bounded. Note that this includes a homogeneous Neumann boundary
condition for $\beta\equiv 0$.  We define $g(\x) :=
{\sigma}/\sqrt{G_2( \x, \x )}$, the infinite-dimensional analogue of
the matrix $D^{-\frac12}$ defined above. Note that $G_2( \x, \x )$ is
the pointwise variance field of $\mathcal{N}(0, \Op^{-2})$ and
$\sigma^2$ is the pointwise variance of the free-space covariance
function defined in \eqref{eq:sig2} in the appendix.  In the appendix
(proposition \ref{prop:g}) we show that $g$ is bounded away from zero
and infinity and that it is differentiable. Thus, $\covop :=
g\Op^{-2}g$ is a valid covariance operator and has constant pointwise
variance $\sigma^2$ (proposition \ref{prop:covar}).  Also, $u \sim
\mathcal{N}(0,\covop)$ are characterized by $u \sim gv$, where $v \sim
\mathcal{N}(0, \Op ^{-2} )$, which is a consequence of
\mycite[Proposition 1.18]{Prato06}.

Note that this transformation can be interpreted probabilistically
using particles that follow a Brownian motion.  For simplicity, we
assume $d=1$ such that $\Op^{-1}$ with Neumann boundary conditions is
a valid covariance operator. Then, the time a particle starting at
$\x\in \Omega$ spends in a set $A\subset\Omega$ before being killed
(killing occurs independently at a rate $\kappa^2=\alpha / \gamma$) is $\int_A
G_1(\x,\y)\,d\y$.  Multiplying $\Op$ by $g$ changes both the Laplacian
part of the operator (responsible for Brownian motion) and the
$\kappa^2$ (responsible for killing of particles).  Multiplying the
Laplacian by $g$ corresponds to a time change. This does not change
the distribution of Brownian paths (without killing), but changes the
particle velocities along the paths.  If one only changes the
traveling speed, one changes the measure on paths, because the rate of
killing stays the same. If the killing rate is changed by the same
factor, one obtains the same distribution of paths but particles are
sped up where the pointwise variance was too large and slowed down
where it was too small.

\subsection{Practical aspects}\label{subsec:practical variance}
Note that this method requires knowledge of the pointwise variance 
of the covariance operator $\Op^{-2}$ with some choice of boundary
conditions. This can be an expensive computation, but there are
several options to approximate the pointwise variance field.

One option is to calculate the pointwise variance through samples. For
the finite element method, this involves applying a square root of the
mass matrix $M$ to vectors
\mycite{Bui-ThanhGhattasMartinEtAl13}. Since this can be a difficult
task, we suggest an alternative. Denote by $K$ the symmetric finite
element discretization of $\Op$.  Then, the covariance matrix is
$K^{-1}MK^{-1}$ \mycite{Bui-ThanhGhattasMartinEtAl13}, and pointwise
variances of the finite element function are known to be the diagonal
entries of the covariance matrix. If we set $Z \sim \mathcal{N}(0,I)$
and let $X= K^{-1}Z, Y = K^{-1}MZ$ we get that
  \begin{align*}
    \cov(X,Y) = \mathbb{E}[XY^T] =  \mathbb{E}[K^{-1}ZZ^TM^TK^{-T}] = K^{-1}MK^{-1}.
  \end{align*}
Thus, we may estimate the pointwise variance as follows. Draw $\{Z_k\}_{k=1}^N$
iid as above, set $X_k= K^{-1}Z_k, Y_k = K^{-1}MZ_k$. Then the
pointwise variance is $\frac 1N \Sigma_{k=1}^N X_k \circ Y_k$, where
$(v\circ u)_i = v_iu_i$ (Hadamard product).

Additionally, often symmetry properties of the domain $\Omega$ can be used to speed up
the computation of the pointwise variance (as, e.g., in \mycite{Bui-ThanhGhattasMartinEtAl13}),
or an approximation for the pointwise variance
field, which is typically smooth, can be obtained through
interpolation with a small number of points.

The problem of estimating the diagonal of a matrix inverse has been
studied extensively in the literature. For fast estimation methods for
diagonals of Green's functions we refer to \mycite{RueMartino07,
  LinLuYingEtAl09}. Alternatively, low-rank matrix approximation of
the discretized covariance operator can be used to approximate the
diagonal.
The problem is considered for a wider class of matrices in
\mycite{BekasCurioniFedulova09, BekasKokiopoulouSaad07} using
stochastic estimation. A method based on applying an inverse
of a sparse matrix to carefully chosen vectors is proposed in \mycite{TangSaad12}. 

\section{Numerical Experiments}\label{section:numerics}
In this section, we study the ability of the methods proposed in
sections \ref{section:robin} and \ref{section:variance} to mitigate
boundary effects in two and three-dimensional numerical examples.
For comparison, we also present results obtained with homogeneous
Neumann boundary conditions as used in
\mycite{Bui-ThanhGhattasMartinEtAl13,IsaacPetraStadlerEtAl15}, with
homogeneous Dirichlet conditions, and with
the constant Robin coefficient as suggested in
\mycite{RoininenHuttunenJanneEtAl14}.
We use the finite element library FEniCS \mycite{LoggMardalWells12}
for our numerical tests,\footnote{The code to reproduce our results may
  be downloaded from \url{https://github.com/yairdaon/covariances}.}
and rely on linear finite elements in our computations.
Unless otherwise specified, we compute the Robin boundary coefficient
using the numerically computed approximate $L^2$-projection of Green's
functions discussed in section \ref{subsec:practial_robin}. For the 
parallelogram and Antarctica meshes we calculate the pointwise variance
at every discretization point $\x\in \Omega$ directly as $(\Op^{-2}
\delta_{\x} )(\x)$. For the cube mesh
we do so using our stochastic estimator derived in section 
\ref{subsec:practical variance} with $10,000$ samples,
which we find to result in reasonable approximations.

\begin{figure}
  \begin{tikzpicture}[thick,scale=1, every node/.style={scale=1}]
    \begin{axis}
      [
      compat = 1.3,
      xmin = 0.004,
      xmax = 0.5,
      xlabel = {$s$},
      ylabel = {$c(\x^{\star}, \x(s) )$},
      ymin   = 0,
      ymax   = 0.0008,
      yticklabels = , 
      ytick = \empty,
      legend style={nodes=right},
      legend style={at={(1,1)},anchor=north east},
      legend cell align=left
      ]
      \draw[black!30!white, thin] (25,0) -- (25,130);
      \addplot 
      [thick, black, mark=none]
      table
      {parallelogram_Free_Space_Greens_Function.txt};
      \addlegendentry{Free-Space};
      \addplot 
      [thick, red, dotted, mark=none] 
      table 
      {parallelogram_Neumann_Greens_Function.txt};
      \addlegendentry{Neumann BC};
      \addplot 
      [thick, green!80!black, loosely dashdotted, mark=none] 
      table 
      {parallelogram_Roininen_Robin_Greens_Function.txt};
      \addlegendentry{Constant Robin
        \mycite{RoininenHuttunenJanneEtAl14}};
      \addplot 
      [thick, green!60!black, densely dashdotted       ,   mark=none] 
      table 
      {parallelogram_Ours_Greens_Function_Radial.txt};
      \addlegendentry{Var.\ Robin (Sec.~\ref{subsec:beta})};
      
      \addplot 
      [red, thick, dashed, mark=none] 
      table 
      {parallelogram_Ours_Constant_Variance_Greens_Function_Radial.txt};
      \addlegendentry{Var.\ Robin+Const.\ Var.};
      
      \addplot 
      [thick, blue, dashed, mark=none] 
      table 
      {parallelogram_Neumann_Constant_Variance_Greens_Function.txt};
      \addlegendentry{Neumann+Const.\ Var.};
    \end{axis}
  \end{tikzpicture}
   \begin{tikzpicture}[thick, scale = 1, every node/.style={scale=1}]
     \clip (0,-0.85) rectangle (3.2,10);
     \begin{axis}
      [
      xmin = -0.01,
      xmax = .5,
      ymin   = -0.01,
      ymax   = .5,
      xlabel = {$$},
      legend cell align=left
      ]      
      \addplot [mark=none, fill=black!10!white] table {parallelogram_vertices.txt}; 
      \addplot [thin, black, mark=none] table {parallelogram_vertices.txt};
      \addplot [thick, red  , mark=none] table {parallelogram_line.txt};
      \addplot [only marks, mark = *   ] table {parallelogram_source.txt};
      \node at (2.4cm,3.cm) {\large $\Omega$};
      \node at (1.1cm,1.1cm) {$\x^\star$};
      \draw [black, thin] (0.9cm,0.9cm) -- (0.42cm,0.35cm);
      \node at (1.8cm,2.1cm) {cross section};
      \draw [black, thin] (1.6cm,1.95cm) -- (1.6cm,0.95cm);
      \draw[black] (23.75,0) -- (23.75,130);
    \end{axis}
  \end{tikzpicture}
  \caption{The left plot shows covariance functions derived from PDE
    operators with different
    boundary conditions for the
    parallelogram domain example (section \ref{subsec:parallelogram}).
    Shown are slices of the Green's function along a cross section.
    The right plot shows part of the parallelogram domain $\Omega$.
    The black dot is
    $\x^{\star} = (0.025, 0.025)^T$---the center of the
    Green's functions. 
    The red line indicates the cross section
    $\x(s) = (s, 0.6s + 0.01 )$, which is used in the left plot.
  \label{fig:parallelogram greens}}

\end{figure}
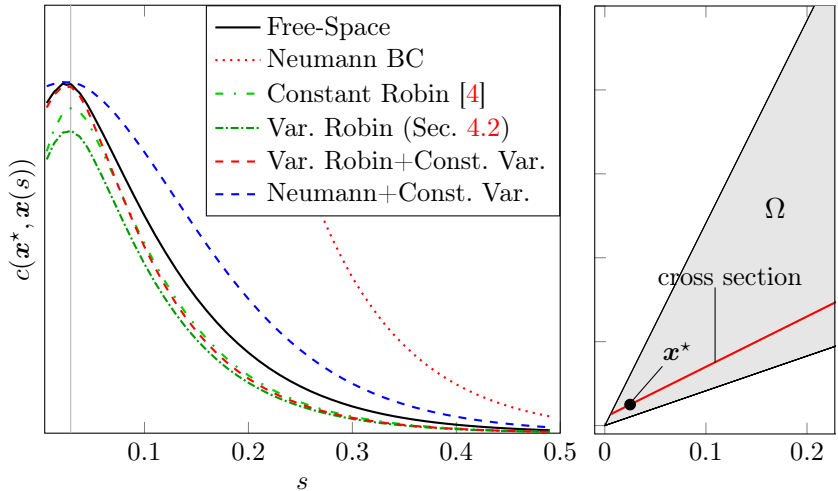

\subsection{Parallelogram example}\label{subsec:parallelogram}
We first illustrate our methods on a two-dimensional domain that is
more challenging than the square domain in figure~\ref{fig:problem
  illustration}. The results shown in figure~\ref{fig:parallelogram
  greens} show cross sections through covariance functions centered at
a particularly challenging point close to a corner of the domain. We
use $\Op = -\Delta + 121$ as the square root of the precision
operator. We discretized the unit square by $128^2$ points and then
transformed it to the parallelogram using a linear transformation such
that its vertices become $(0,0)$, $(\cos \theta_{-}, \sin
\theta_{-})$, $(\cos \theta_{+}, \sin \theta_{+})$ and
$(\cos\theta_{-} + \cos\theta_{+}, \sin\theta_{-} + \sin\theta_{+})$,
where $\theta_{\pm} = \frac{\pi}{4} \pm \frac{\pi}{8}$.  As can be
seen in figure~\ref{fig:parallelogram greens}, using Robin boundary
conditions results in a significant improvement over Neumann boundary
conditions. In this problem, the constant Robin coefficient
$\beta=\sqrt{\alpha}/1.42$ from \mycite{RoininenHuttunenJanneEtAl14}
and the variable Robin coefficient perform similarly. Moreover, figure
\ref{fig:parallelogram greens} also shows that the variance
normalization method results in constant pointwise variances, but that
the resulting covariance functions differ from the free-space
covariance functions. Combining varying Robin boundary conditions with
variance normalization leads to the best results.

\subsection{Antarctica domain example}\label{subsec:Antarctica}
We also show Green's functions on the Antarctica domain used for
Bayesian inference in \mycite{IsaacPetraStadlerEtAl15}. We use 
$\Op = -\Delta + \alpha, \alpha = 10^{-5}$ as the square root of the
precision operator, and measure distances in kilometres
(Antarctica extends laterally between 3000 and 6000 kilometers). 
Note that in \mycite{IsaacPetraStadlerEtAl15}, the authors used
$\alpha = 10^{-6}$, 
which leads to stronger point correlation.\footnote{To be precise, in
  \mycite{IsaacPetraStadlerEtAl15}, the authors used $\Op=10(-\Delta +
  10^{-6})$.}  
We used a finite element discretization with 27,749 cells.
Figure~\ref{fig:antarctica greens} shows a comparison of two domain
Green's functions, one centered far and one close to the
boundary. The differences between the covariance functions on the left
of the domain (which is West Antarctica) is due to the different
boundary conditions. As for the previous example, we find that using
Robin boundary conditions largely mitigates undesired boundary
effects. 

\begin{figure}
  \begin{tikzpicture}
    \node at (-4.5,1.25) {
      \includegraphics 
      [width=0.45\textwidth]
      {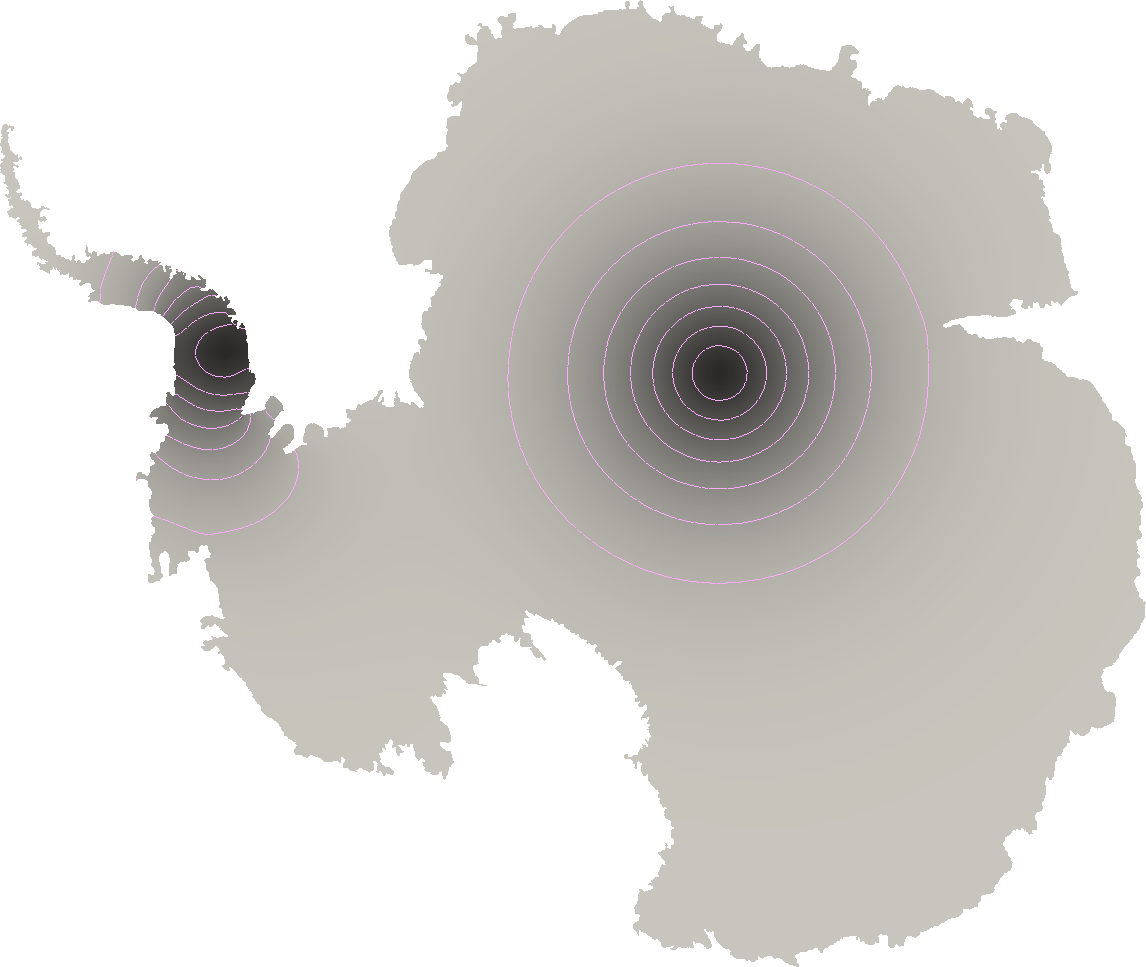}
    };
    \node at (-6,     3) {\large (a)};  

    \node at (-0.2,2.25) {
      \includegraphics
      [width=0.16\textwidth, trim=0 15cm 28cm 4cm, clip]
      {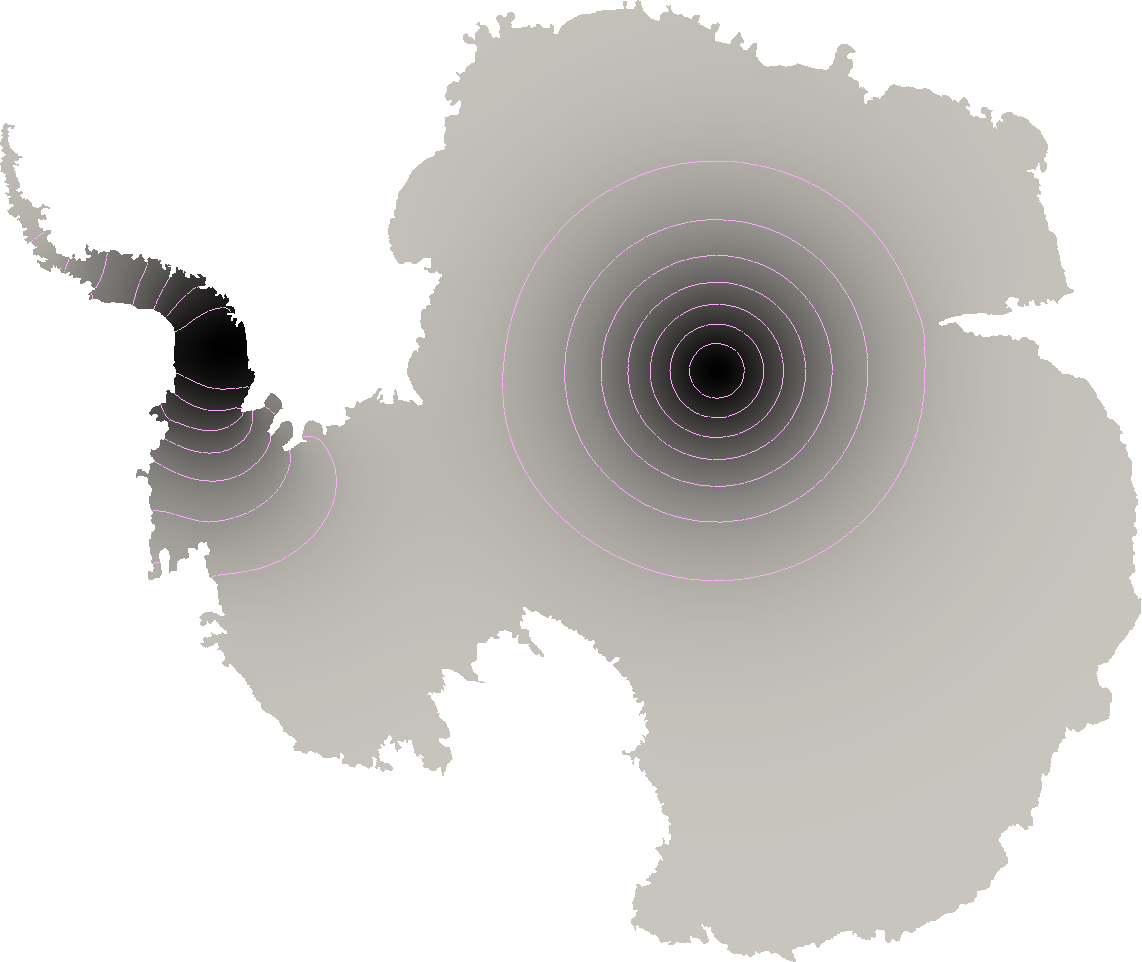}
    };
    \node at (0,      3) {\large (b)};

    \node at (2,2.25) {
      \includegraphics 
      [width=0.16\textwidth,  trim=0 15cm 28cm 4cm, clip]
      {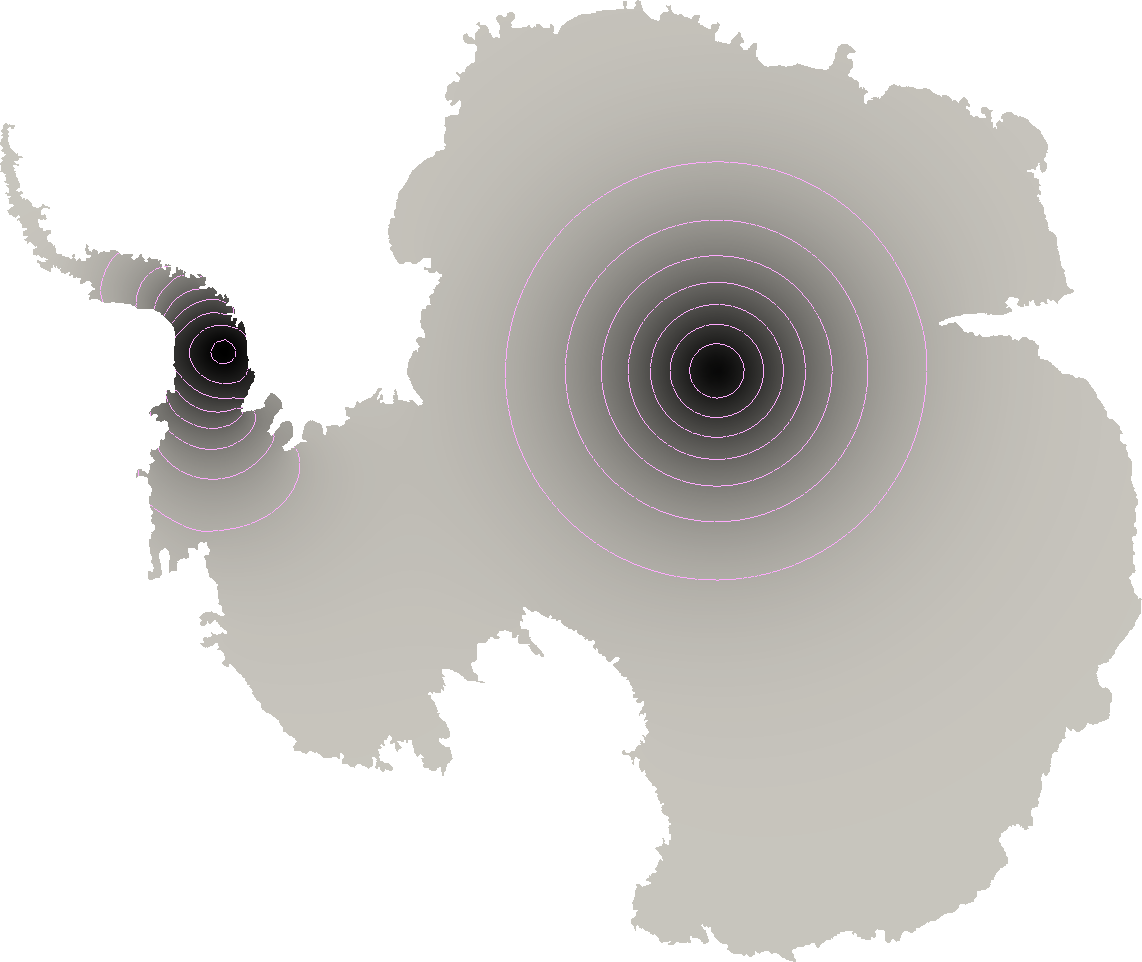}
    }; 
    \node at (2.25,   3) {\large (c)};

    \node at (-0.2,0) {
      \includegraphics
      [width=0.16\textwidth, trim=0 15cm 28cm 4cm, clip]
      {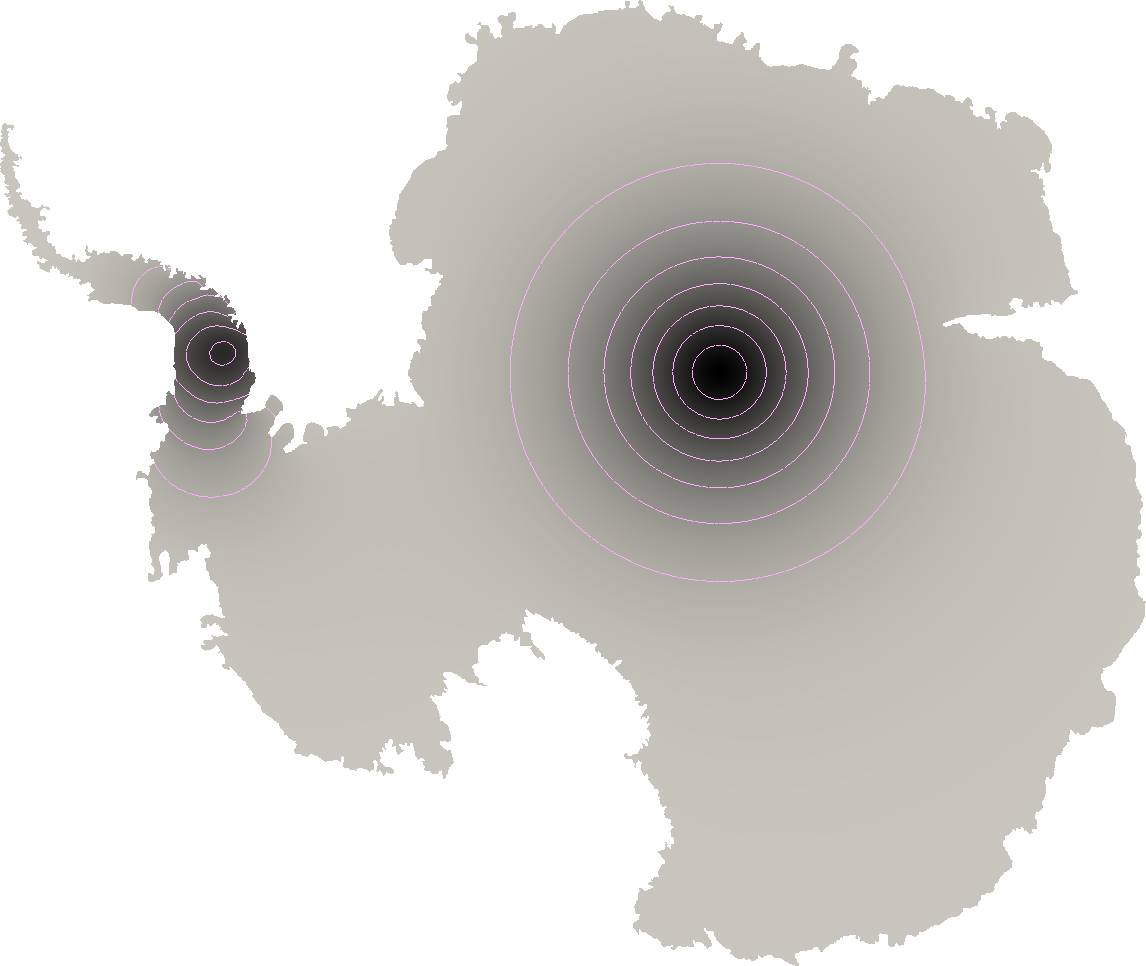}
    };
    \node at (0,    0.7) {\large (d)};

    \node at (2,0) {
      \includegraphics
      [width=0.16\textwidth,  trim=0 15cm 28cm 4cm, clip]
      {antarctica_Neumann_Greens_Function.png}
    }; 
    \node at (2.25, 0.7) {\large (e)};

    \node (colorbar) at (-8,1) {
      \includegraphics[width=0.11\textwidth]
      {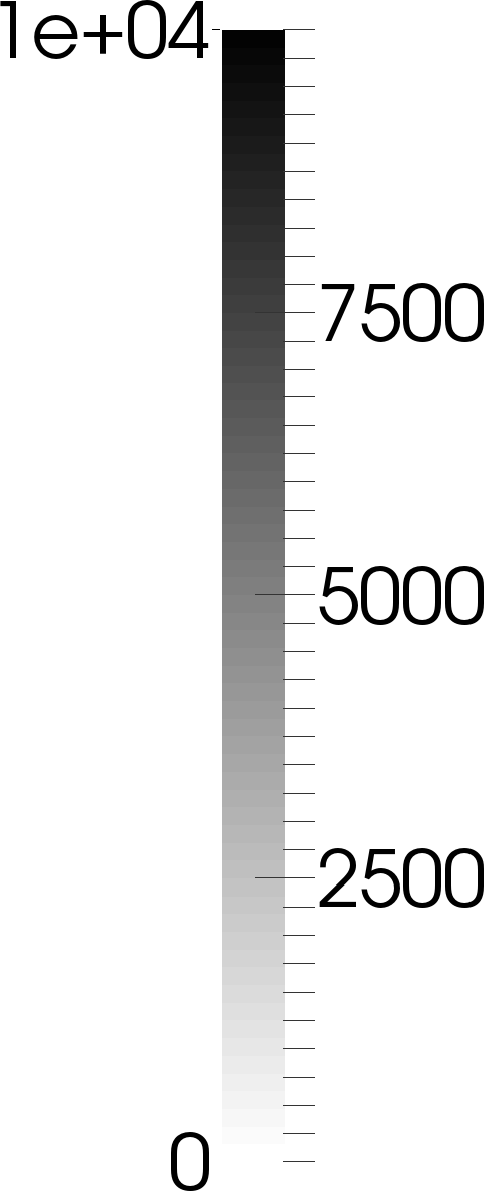}
    };
    \node at (-8,3.1) {
    \begin{tabular}{c}
      co-\\ variance 
    \end{tabular}
  };
  \end{tikzpicture}
  \caption{Green's functions for the Antarctica
    domain detailed in section \ref{subsec:Antarctica}.
    Results for optimal Robin boundary conditions
    combined with variance normalization are shown
    in (a). These results should be compared
    with figure~\ref{fig:problem illustration}, which uses homogeneous
    Neumann boundary conditions. Magnifications are shown for 
    Neumann conditions with normalized variance (b),
    varying Robin boundary condition from section \ref{section:robin} (c), 
    Robin condition with constant coefficient taken from \mycite{RoininenHuttunenJanneEtAl14} (d),
    and Neumann boundary condition (e).}
  \label{fig:antarctica greens}
\end{figure}
 
\begin{figure}
  \centering
  \begin{tikzpicture}
    \node (11) at (-4,4)
          {\includegraphics[width=0.35\textwidth]{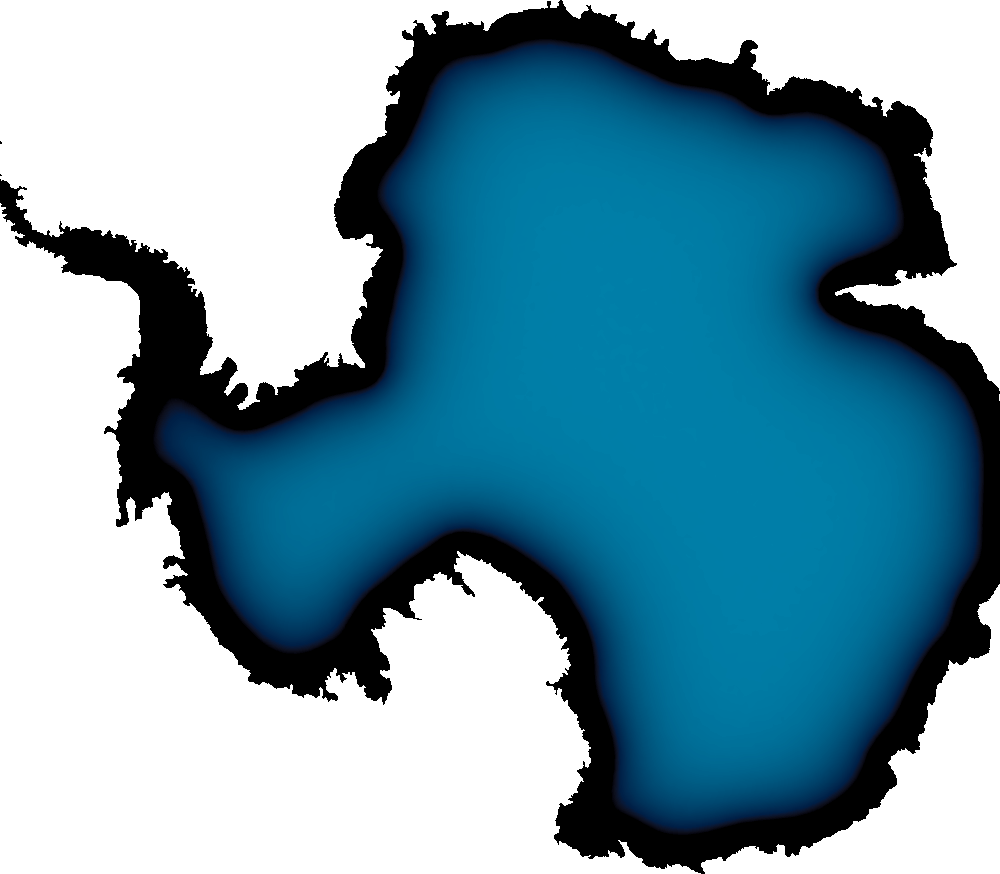}};
    \node (12) at (1,4)
          {\includegraphics[width=0.35\textwidth]{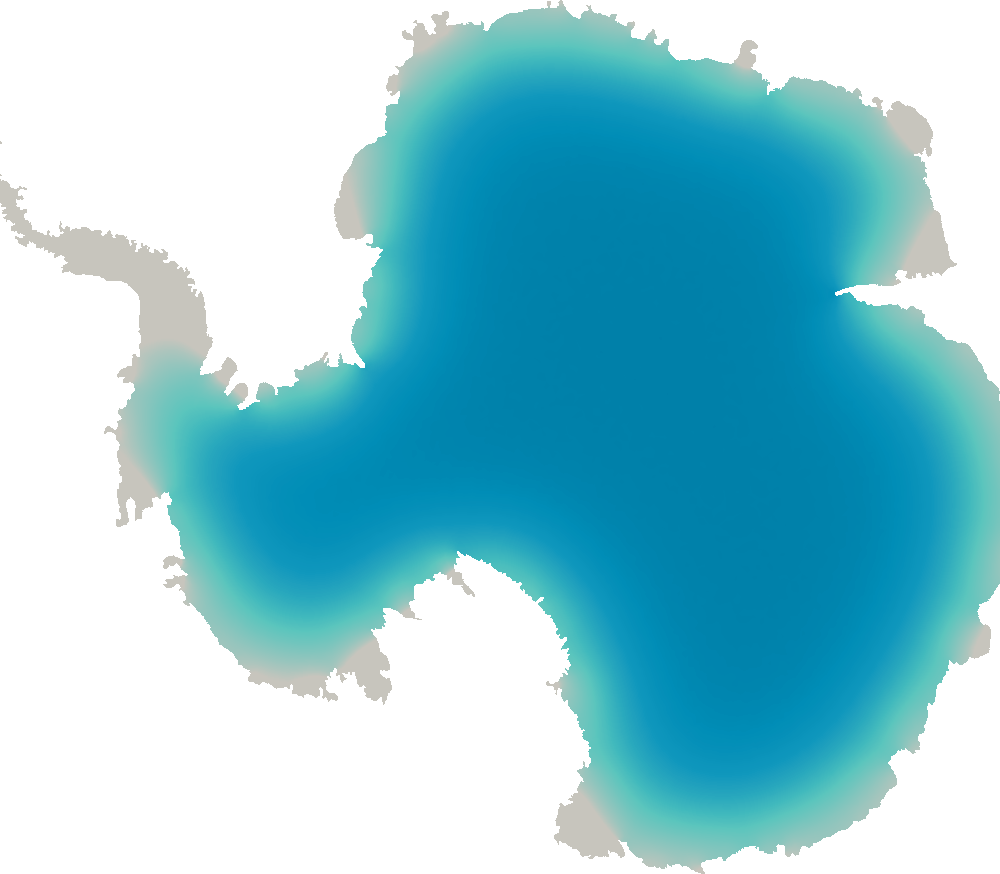}};
    \node (21) at (-4,0)
          {\includegraphics[width=0.35\textwidth]{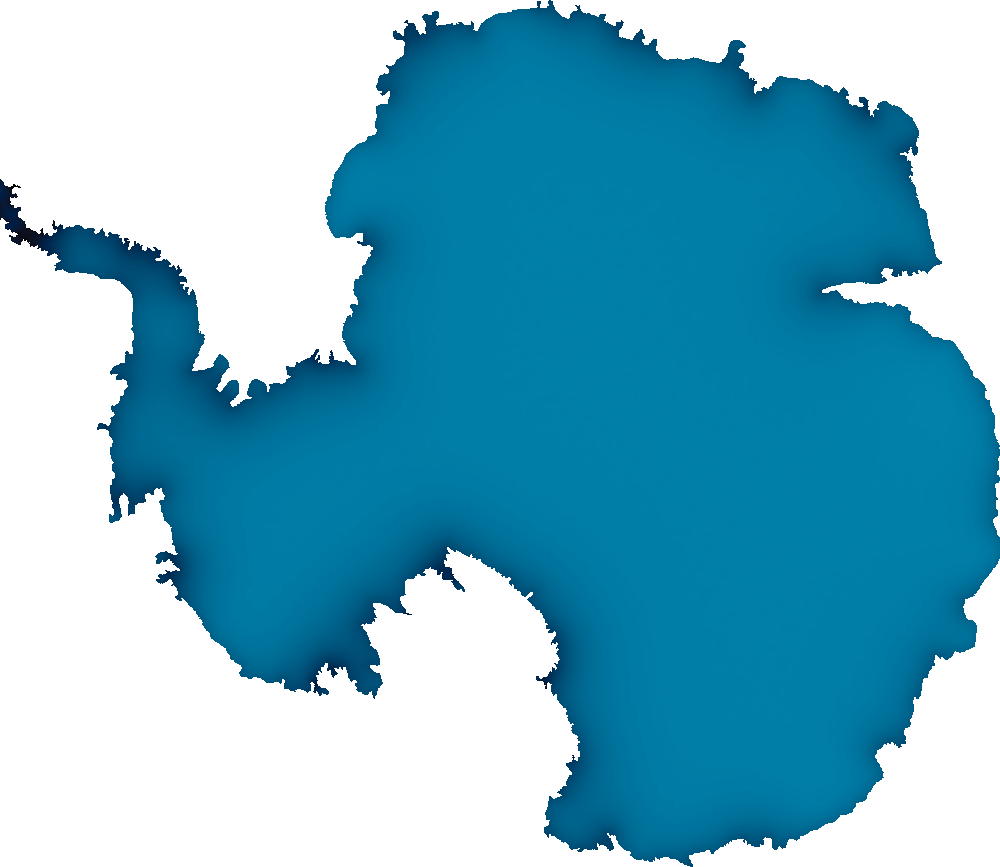}};
    \node (22) at (1,0)
          {\includegraphics[width=0.35\textwidth]{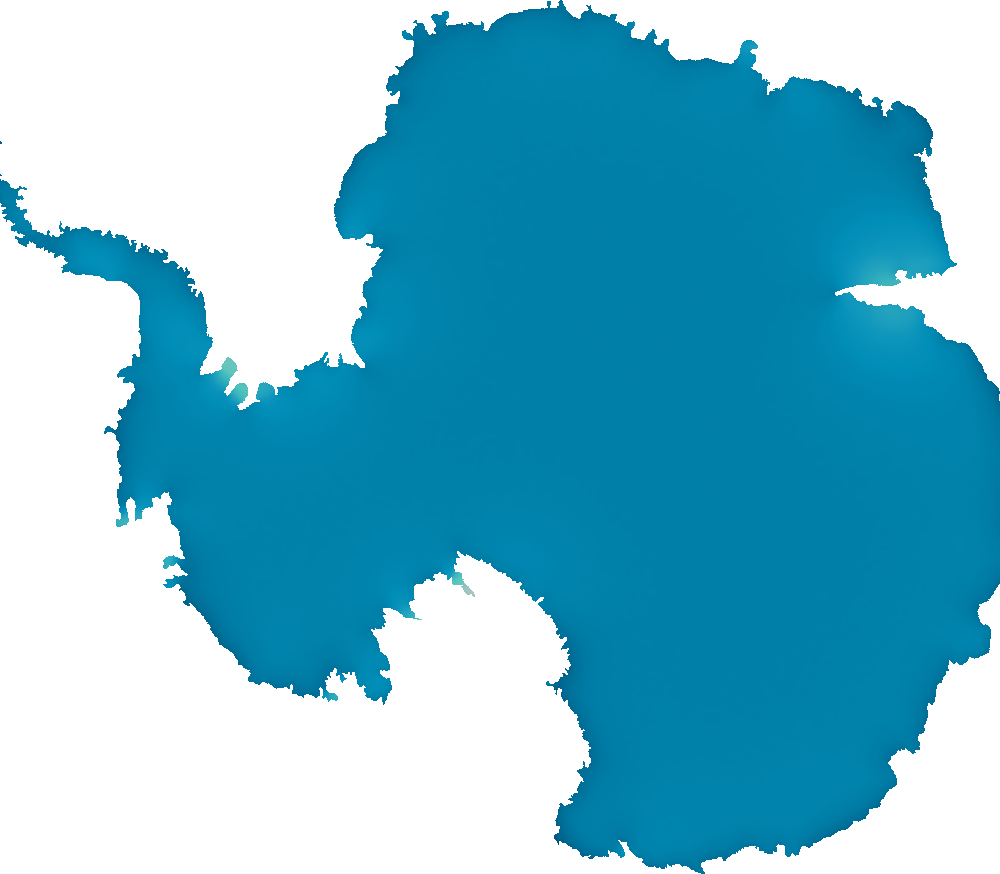}};
    \node (colorbar) at (4.5,2)
          {\includegraphics[width=0.12\textwidth]{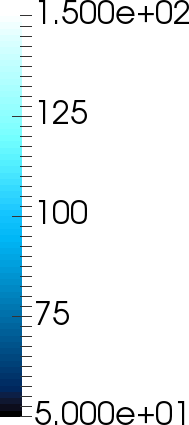}};
          \node at (-5.5,5.5) {\large (a)};
          \node at (-0.5,5.5) {\large (b)};
          \node at (-5.5,1.5) {\large (c)};
          \node at (-0.5,1.5) {\large (d)};
          \node at (4.6,4.25) {
            \begin{tabular}{c}
              pointwise\\
              std.\ dev. 
            \end{tabular}
          };
  \end{tikzpicture}
  \caption{Pointwise standard deviation fields for Antarctica with
    different boundary conditions for the underlying PDE operator:
    Dirichlet conditions (a), Neumann conditions (b), Robin conditions
    with constant coefficient following \mycite{RoininenHuttunenJanneEtAl14} (c),
    and Robin conditions with varying
    coefficient computed as in section \ref{subsec:beta} (d).}
  \label{fig:antarctica variance}
\end{figure}

We also show pointwise standard deviation (i.e., the square root of
the pointwise variance) fields in figure~\ref{fig:antarctica
  variance}.
Since the free-space Green's functions are independent of the boundary,
deviation from a constant pointwise standard deviation is an indicator
for the strength of undesired boundary effects.
We only show standard deviation fields for variants of Robin boundary
conditions as the variance normalization methods discussed in section
\ref{section:variance} ensures constant standard deviation for the
resulting operator. We find that using Dirichlet or Neumann
boundary conditions can have a significant effect also on the
pointwise standard deviations fields. These boundary effects are
significantly diminished for the cases with Robin boundary conditions.

\begin{figure}
  \centering
  \begin{tikzpicture}
    
    \node (11) at (-4,4)
    {\includegraphics[width=0.3\textwidth]
      {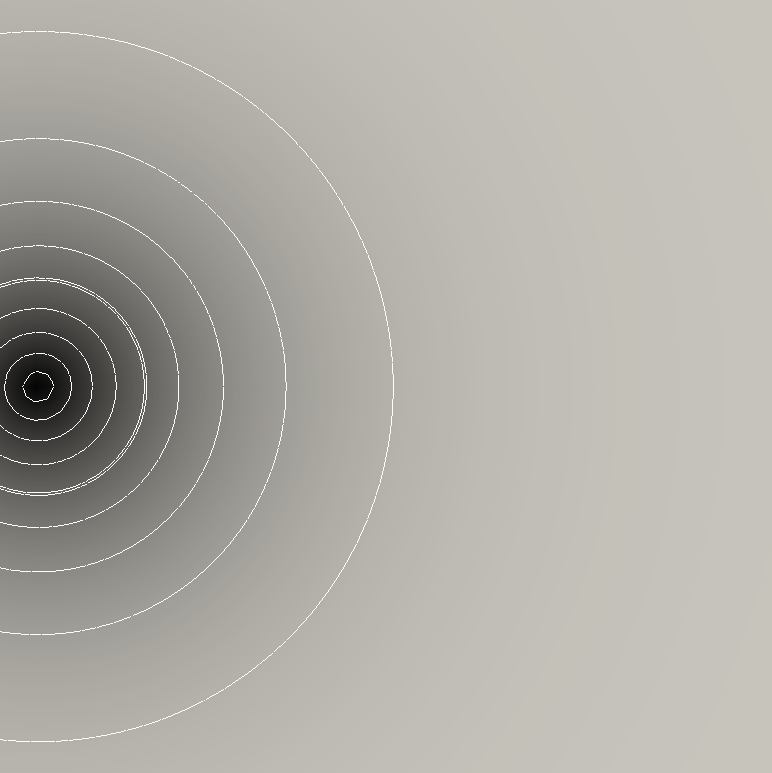}};
    \node at (-2.5,5.5) {\large (a)};
    \node (12) at (0,4)
    {\includegraphics[width=0.3\textwidth]
      {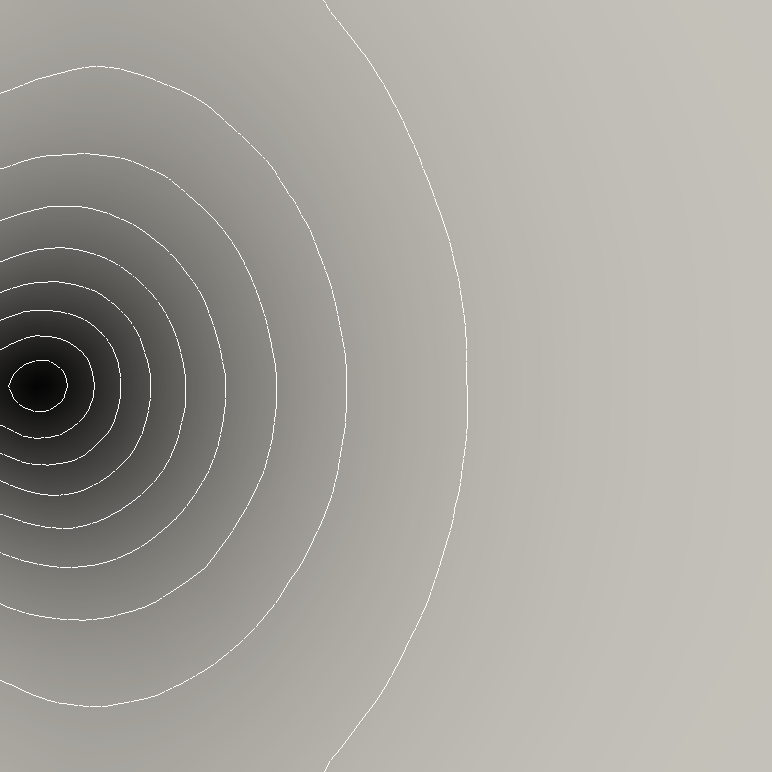}};
    \node at (1.5,5.5) {\large (b)};
    \node (21) at (-4,0)
    {\includegraphics[width=0.3\textwidth]
      {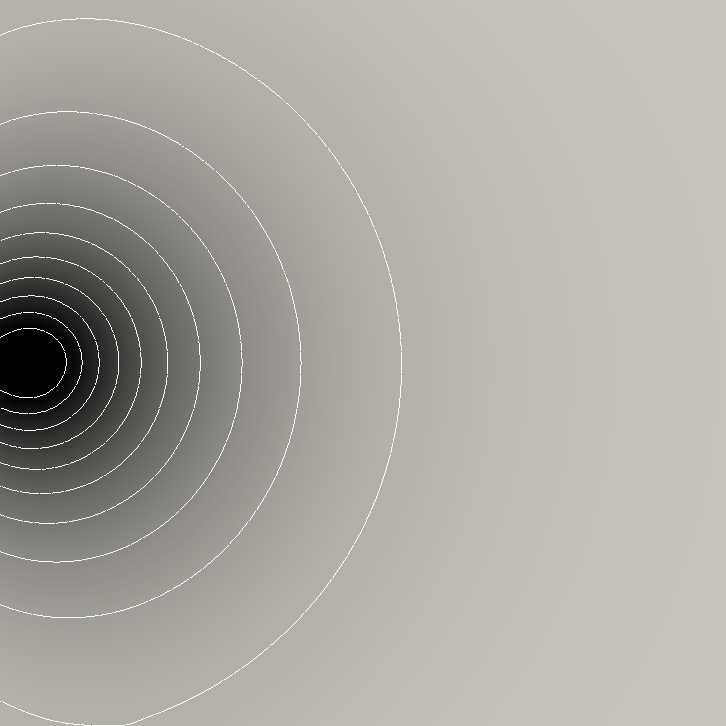}};
    \node at (-2.5,1.5) {\large (c)};
    \node (22) at (0,0)
    {\includegraphics[width=0.3\textwidth]
      {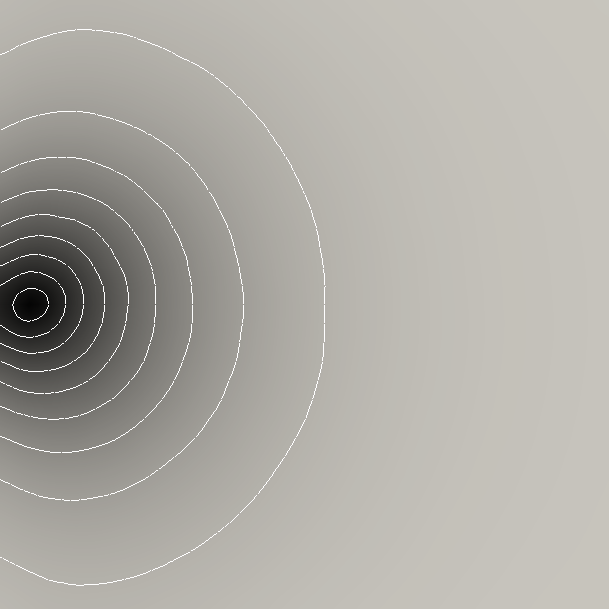}};
    \node at (1.5,1.5) {\large (d)};
    \node (colorbar) at (3.5,2)
    {\includegraphics[width=0.13\textwidth]{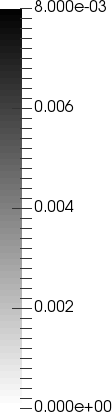}};
    \node at (3.3,5.35) {\large
      \begin{tabular}{c}
        covariance
      \end{tabular}
    };
  \end{tikzpicture}
  \caption{Two-dimensional slices through Green's functions for the
    unit cube example from section \ref{subsec:unit_cube}.
    The center of the green's function is located at $\x^{\star} =
    (0.05,0.5,0.5)^{T}$, and the slice shown is
    $\{ \x^{\star} + (s,0,0)^{T} + (0,t,0)^{T}, s,t \in \mathbb{R} \} \cap [0,1]^3$.
    Shown are the free-space Green's function (a), the Green's
    function computed with Neumann boundary with normalized variance (b),
    with Robin boundary conditions with variable coefficient $\beta$
    (c), and with Robin boundary conditions with
    variable coefficient and normalized variance (d).}
  \label{fig:cube greens}
\end{figure}

\subsection{Unit cube example}\label{subsec:unit_cube}
As three-dimensional test problem, we use the unit cube $[0,1]^3$, and
$\Op = -\Delta + 25$ as the square root of the precision.  We use a
mesh with $64^3$ discretization points. 
In figure~\ref{fig:cube greens}, we show Green's functions for a slice
through the cube.  The values of $\beta$ on a part of
that cross section are shown in figure \ref{fig:beta}. The boundary
conditions shown in figure \ref{fig:cube greens} are all significant
improvements from the results found for either Dirichlet or Neumann
boundary conditions, which we do not show.

\appendix
\section{Statements and proofs for the normalized variance operator}
This section contains precise statements and proofs regarding the
normalizing covariance method presented in section
\ref{section:variance}. We use the notation from section
\ref{section:preliminaries}, and assume a Robin boundary
condition for $\Op$ with $0\le \beta(\x)\le L$, $L>0$, 
for all $\x\in\partial \Omega$. Note that this includes homogeneous
Neumann boundary conditions.
We rely on the assumption that $\Op$ is a
Laplacian-like operator in the sense of \mycite[Assumption
2.9]{Stuart10}. In particular, 
$\Op$ is positive definite, self-adjoint and invertible.

\begin{proposition}[Properties of pointwise variance]\label{prop:g}
  Let $G_2$ be the Green's function of $\Op^{2}$ on a precompact
  domain $\Omega$. Define $g(\x) := {\sigma}/{\sqrt{G_2( \x,\x )}}$,
  with $\sigma > 0$ a constant. Then there exist positive constants $c,C$
  such that $c < g(\x) < C$.
\end{proposition}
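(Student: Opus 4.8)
The plan is to show that $G_2(\x,\x)$ is bounded above and below by positive constants, uniformly in $\x\in\Omega$; the claim on $g$ then follows immediately by taking $c = \sigma/\sqrt{\sup_\x G_2(\x,\x)}$ and $C = \sigma/\sqrt{\inf_\x G_2(\x,\x)}$. The natural handle on $G_2(\x,\x)$ is the spectral/variational representation: since $\Op$ is Laplacian-like in the sense of \cite[Assumption 2.9]{Stuart10}, it is self-adjoint and positive definite with a complete orthonormal eigenbasis $\{e_j\}$ and eigenvalues $0 < \lambda_1 \le \lambda_2 \le \cdots \to \infty$, and the Green's function of $\Op^2$ admits the expansion $G_2(\x,\y) = \sum_j \lambda_j^{-2} e_j(\x) e_j(\y)$. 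Hence $G_2(\x,\x) = \sum_j \lambda_j^{-2} e_j(\x)^2 = \|\Op^{-2}\delta_\x\|_{?}$-type quantity; more usefully, $G_2(\x,\x) = \langle \Op^{-1}\delta_\x, \Op^{-1}\delta_\x\rangle_{L^2(\Omega)} = \|G_1(\x,\cdot)\|_{L^2(\Omega)}^2$, using that $\Op^{-1}$ is a self-adjoint square root of $\Op^{-2}$ and $G_2(\x,\cdot) = \Op^{-1}G_1(\x,\cdot)$.

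For the \emph{upper} bound, I would use the fact (already invoked in the excerpt, via \cite[Ex. 3.39]{Hairer09} and the H\"older continuity statement for $\Op^{-p}$, $p>d/2$) that $\Op^{-2}$ has a continuous kernel on $\Omega\times\Omega$; since $\Omega$ is precompact, $\overline\Omega\times\overline\Omega$ is compact, so the continuous function $(\x,\y)\mapsto G_2(\x,\y)$ attains a finite maximum there, in particular $\sup_{\x\in\Omega} G_2(\x,\x) = C_1 < \infty$. For the \emph{lower} bound, the key point is that $G_2(\x,\x) = \|G_1(\x,\cdot)\|_{L^2(\Omega)}^2 > 0$ for every fixed $\x$ (it cannot vanish, since $G_1(\x,\cdot)\not\equiv 0$ as $\Op G_1(\x,\cdot) = \delta_\x \ne 0$), and then to upgrade this pointwise positivity to a uniform positive lower bound by a compactness argument: $\x\mapsto G_2(\x,\x)$ is continuous on the precompact $\Omega$, hence — after noting the kernel extends continuously to $\overline\Omega$ — attains its infimum $C_0$ on $\overline\Omega$, and $C_0 > 0$ since $G_2(\x,\x)>0$ at the minimizer. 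An alternative, more hands-on route to the lower bound is a comparison/barrier argument: the positivity-preserving property of $\Op^{-1}$ (maximum principle for $-\gamma\Delta+\alpha$ with Robin/Neumann data, valid since $\alpha>0$) gives $G_1(\x,\cdot)\ge 0$, and on a small ball $B_r(\x)\subset\Omega$ one can bound $G_1(\x,\cdot)$ below by the free-space Green's function minus a harmless harmonic corrector, yielding a radius-$r$-dependent but $\x$-independent lower bound on $\int_{B_r(\x)}G_1(\x,\y)^2\,d\y$; one would need $r$ uniform, which follows from precompactness plus the piecewise-smooth boundary by working with half-balls near $\partial\Omega$.

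I expect the \textbf{main obstacle} to be the uniform lower bound near the boundary: away from $\partial\Omega$ the interior ball argument is routine, but for $\x$ close to (or on) $\partial\Omega$ one must ensure the construction of the comparison region and the associated constant does not degenerate. The cleanest fix is to lean on the already-cited regularity — $\Op^{-2}$ has a (H\"older-)continuous kernel on the closure, so $\x\mapsto G_2(\x,\x)$ is continuous up to $\partial\Omega$ — and invoke compactness of $\overline\Omega$ together with strict positivity of $G_2(\x,\x)$ at each boundary point (again because $G_1(\x,\cdot)$ is a nonzero nonnegative $L^2$ function). The differentiability of $g$ asserted in the proposition statement, if it is to be proved here, follows from differentiability of $\x\mapsto G_2(\x,\x)$ (inherited from elliptic regularity of the kernel) combined with the just-established fact that $G_2(\x,\x)$ stays bounded away from $0$, so that $g = \sigma/\sqrt{G_2(\cdot,\cdot)}$ is a smooth function of a non-vanishing differentiable quantity.
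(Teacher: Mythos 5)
Your proposal is correct and shares the paper's backbone — the identity $G_2(\x,\x)=\int_\Omega G_1^2(\x,\z)\,d\z$ and a compactness argument on $\bar\Omega$ — but the two halves are justified by genuinely different means. For the \emph{upper} bound, the paper does not invoke continuity of the kernel on $\bar\Omega\times\bar\Omega$; instead it expands $G_2(\x,\x)=\sum_k\lambda_k\phi_k^2(\x)$ via Karhunen--Lo\`eve and uses exactly the two ingredients supplied by the Laplacian-like assumption of Stuart: a uniform bound on $\|\phi_k\|_\infty$ and the trace-class property $\sum_k\lambda_k<\infty$. This is slightly more economical than your route, which presupposes that the kernel of $\Op^{-2}$ extends continuously to the closed product domain — true here, but a stronger regularity statement than what the cited assumption hands you directly. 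For the \emph{lower} bound, the paper argues by contradiction: absence of a positive lower bound would, by compactness (and implicitly the same continuity of the diagonal you lean on), produce a point $\x\in\bar\Omega$ with $G_2(\x,\x)=0$, hence $G_1(\x,\cdot)=0$ a.e., which is then ruled out by a probabilistic heuristic (a particle started at $\x$ would have to be killed instantly, impossible except under Dirichlet conditions, which are excluded). Your justification — $G_1(\x,\cdot)\not\equiv 0$ because $\Op G_1(\x,\cdot)=\delta_\x\neq 0$ — is cleaner and more rigorous on this point, and your remark that the delicate case is $\x$ on or near $\partial\Omega$ (where one must make sense of $G_1(\x,\cdot)$ for the extended kernel) identifies precisely the step the paper glosses over with ``we extend Green's functions to $\partial\Omega$.'' Finally, note that the proposition as stated only claims the two-sided bound; differentiability of $g$ is mentioned in the body of the paper but not proved in this proposition, so your closing paragraph is optional content.
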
 
\begin{proof}
  It suffices to show these properties for
  $G_2(\x,\x)$. First, note that
  \begin{equation}\label{eq:convolution}
    G_2(\x,\x) = \int_{\Omega} G_1(\x, \z) G_1( \z, \x ) d\z =
     \int_{\Omega} G_1^2(\x, \z ) d\z.
  \end{equation}

  If there was no positive lower bound for $G_2(\x,\x)$, then due to the
  compactness of $\bar \Omega$, there was $\x \in \bar \Omega$ such
  that $G_2(\x, \x ) = 0$ (we extend Green's functions to $\partial
  \Omega$). However, \eqref{eq:convolution} implies that
  $G_1(\x,\cdot) = 0$ almost everywhere. From the probabilistic
  interpretation of these Green's functions as (density of) time spent
  at a point this can only happen if a particle is immediately killed
  at $\x$. This is not possible for an interior point and can only
  be possible for a boundary point with a homogeneous Dirichlet
  boundary condition, which we exclude. We may conclude that no such
  sequence exists and that there is some lower bound $c > 0$ for which
  $G_2(\x, \x) > c > 0$.

  We know $G_2(\x,\x)$ is the pointwise variance of a $u \sim
  \mathcal{N} (0, \Op^{-2})$ and by the Karhunen-Lo\`eve expansion
  $u(\x) = \sum_{k \in \mathbb{K}} \lambda^{1/2}_k \phi_k(\x) \xi_k$
  with $\xi_k \sim \mathcal{N}(0,1)$ iid and $\{\lambda_k,\phi_k\}_{k \in \mathbb{K}}$
  eigenpairs of the (trace-class) operator $\Op^{-2}$. Then
  \begin{equation}\label{eq:diag}
    G_2(\x,\x) = \var \left (u(\x) \right ) =\mathbb{E}[u^2(\x)] = \sum_{k \in \mathbb{K}} \lambda_k \phi^2_k(\x).  
  \end{equation}
  Since $\Op$ is a Laplacian-like operator according to
  \mycite[Assumption 2.9]{Stuart10} we have a uniform bound on $\|
  \phi_k \|_{\infty}$ and since $\Op^{-2}$ is trace-class $\sum_{k \in
    \mathbb{K}} \lambda_k <\infty$. Using these facts in
  \eqref{eq:diag} gives a uniform upper bound $G_2(\x,\x) < C <
  \infty$, as desired.
\end{proof}

\begin{definition}\label{def:C}
  Let $\covop$ be defined via 
  $[\covop u](\x) := g(\x) [\Op^{-2}(gu)](\x)$
  with $g(\x)$ as in proposition \ref{prop:g}.  
  In the following, we write $\covop = g\Op^{-2} g$, with the
  understanding that $\Op^{-2}$ operates on the product of all 
  functions to its right.
\end{definition}
We now show that $\covop$ is a valid covariance operator with constant pointwise variance.

\begin{proposition}[Properties of the covariance operator]\label{prop:covar}
 $\covop$ is positive definite, self-adjoint, invertible, trace-class
  and has constant pointwise variance. Moreover $u \sim \mathcal{N}(0,
  \covop)$ satisfies $u \sim gv$, where $v \sim
  \mathcal{N}(0,\Op^{-2})$.
\end{proposition}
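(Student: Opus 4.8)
The plan is to establish each listed property of $\covop = g\Op^{-2}g$ in turn, leaning on Proposition~\ref{prop:g} (which gives $0 < c < g(\x) < C < \infty$ and differentiability of $g$) and on the corresponding properties of $\Op^{-2}$, which are available since $\Op$ is Laplacian-like in the sense of \mycite[Assumption 2.9]{Stuart10}. First I would treat \emph{self-adjointness}: for $u,w\in L^2(\Omega)$, write $\langle \covop u, w\rangle = \langle g\Op^{-2}(gu), w\rangle = \langle \Op^{-2}(gu), gw\rangle = \langle gu, \Op^{-2}(gw)\rangle = \langle u, g\Op^{-2}(gw)\rangle = \langle u, \covop w\rangle$, using that multiplication by the real bounded function $g$ is self-adjoint and that $\Op^{-2}$ is self-adjoint. \emph{Positive definiteness} follows from the same manipulation with $w=u$: $\langle \covop u, u\rangle = \langle \Op^{-2}(gu), gu\rangle > 0$ whenever $gu\neq 0$, and since $g$ is bounded away from zero, $gu = 0$ forces $u=0$. \emph{Invertibility}: $g$ is invertible as a multiplication operator (its reciprocal is bounded because $g \ge c > 0$), and $\Op^{-2}$ is invertible with inverse $\Op^2$, so $\covop^{-1} = g^{-1}\Op^2 g^{-1}$ is a well-defined (unbounded, densely defined) operator.

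Next I would address the \emph{trace-class} property. Since $g$ is bounded, multiplication by $g$ is a bounded operator, and the trace-class operators form a two-sided ideal in the bounded operators; hence $\covop = g\cdot\Op^{-2}\cdot g$ is trace-class because $\Op^{-2}$ is. Then I would verify the \emph{probabilistic characterization} $u\sim gv$ with $v\sim\mathcal N(0,\Op^{-2})$: multiplication by $g$ is a bounded linear map on $L^2(\Omega)$, so by \mycite[Proposition 1.18]{Prato06} the pushforward of the Gaussian measure $\mathcal N(0,\Op^{-2})$ under this map is Gaussian with mean $g\cdot 0 = 0$ and covariance operator $g\,\Op^{-2}\,g^* = g\Op^{-2}g = \covop$; since a centered Gaussian measure on a separable Hilbert space is determined by its covariance operator, $u\sim\covop$ is equivalent to $u\sim gv$. (One should note that $g$ preserves continuity since it is continuous, so the measure still lives on $C(\Omega)$, consistent with section~\ref{subsec:gaussian measures}.)

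Finally, the \emph{constant pointwise variance}. The cleanest route is through the characterization just established: if $u = gv$ pointwise with $v$ having covariance function $G_2$, then for each $\x\in\Omega$,
\begin{equation*}
  \mathrm{Var}(u(\x)) = g(\x)^2\,\mathrm{Var}(v(\x)) = g(\x)^2\,G_2(\x,\x) = \frac{\sigma^2}{G_2(\x,\x)}\,G_2(\x,\x) = \sigma^2,
\end{equation*}
using the definition $g(\x) = \sigma/\sqrt{G_2(\x,\x)}$. Equivalently, the Green's function of $\covop$ is $g(\x)g(\y)G_2(\x,\y)$, whose diagonal is $g(\x)^2 G_2(\x,\x) = \sigma^2$.

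The main obstacle I anticipate is not any single computation but rather making the "pointwise" statements rigorous: $G_2(\x,\x)$ must be well-defined pointwise (not merely as an $L^2$ object), and the identity $\mathrm{Var}(u(\x)) = g(\x)^2 G_2(\x,\x)$ must hold pointwise. This is exactly where one needs the samples of $\mathcal N(0,\Op^{-2})$ to be (Hölder) continuous and the Karhunen--Lo\`eve representation $G_2(\x,\x) = \sum_k \lambda_k\phi_k^2(\x)$ with a uniform bound on $\|\phi_k\|_\infty$ — facts already invoked in the proof of Proposition~\ref{prop:g}. I would therefore reuse that continuity/boundedness infrastructure, so that $g$ is genuinely a continuous function and the multiplication operator it defines interacts cleanly with the (continuous) pointwise evaluations of the random field. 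Everything else is a routine ideal-property and pushforward argument.
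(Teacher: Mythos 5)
Your proposal is correct and proves all five claims; the overall skeleton (lean on Proposition~\ref{prop:g} for the bounds on $g$, use self-adjointness and positive definiteness of $\Op^{-2}$, invert as $g^{-1}\Op^{2}g^{-1}$, and invoke \mycite[Proposition 1.18]{Prato06} for the $u\sim gv$ characterization) matches the paper. You diverge on two sub-claims, and in both cases your route is arguably the cleaner one. For the trace-class property the paper argues ``since its diagonal is constant, $\covop$ is trace class with trace $\sigma^2|\Omega|$,'' which implicitly relies on a Mercer-type identity between the trace and the integral of the kernel diagonal; your argument via the two-sided-ideal property of the trace class in $B(L^2(\Omega))$ (with $g$ a bounded multiplier and $\Op^{-2}$ trace class) is more standard and avoids that implicit step, though it does not immediately yield the value of the trace. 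For the constant pointwise variance the paper computes the kernel diagonal directly, $(\covop\delta_{\x})(\x)=g(\x)^2(\Op^{-2}\delta_{\x})(\x)=\sigma^2$, using $g\delta_{\x}=g(\x)\delta_{\x}$; you instead first establish $u\sim gv$ and then read off $\var(u(\x))=g(\x)^2G_2(\x,\x)=\sigma^2$. These are equivalent, and your closing remark about needing continuity of samples and the Karhunen--Lo\`eve bound to make the pointwise statements rigorous is exactly the infrastructure the paper sets up in the proof of Proposition~\ref{prop:g}. No gaps.
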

\begin{proof}
  First observe that:
  \begin{align*}
    (\covop \delta_{\x}) (\x) &= \left ( g \Op^{-2}(g \delta_{\x}) \right )(\x) \\ 
    &= \frac{\sigma}{\sqrt{G_2(\x,\x)}}  (\Op^{-2}( \frac{\sigma}{\sqrt{G_2(\x,\x)}} \delta_{\x}))(\x) \\ 
    &= \frac{\sigma^2}{G_2(\x,\x)}  (\Op^{-2} \delta_{\x})(\x) \\ 
    &= \sigma^2.
  \end{align*}
  Since its diagonal is constant, $\covop$ is trace class with trace
  \begin{equation*}
    Tr( \covop ) = \mathbb{E}_{u\sim \mathcal{N} (0, \covop)} \|u
    \|_2^2 =  \sigma^2 |\Omega|
  \end{equation*}
  (the first equality follows from the Karhunen-Lo\`eve expansion).
  Let $u \in L^2(\Omega)$. By proposition \ref{prop:g}, $ug \in
  L^2(\Omega)$.  Then positive definiteness follows from the fact that
  $\Op^{-2}$ is positive definite. A straightforward calculation shows
  that $\covop$ is self-adjoint in the $L^2(\Omega)$ inner product.
  Using $g > 0$ from proposition \ref{prop:g} and the fact that $\Op$
  is invertible on $\dom(\Op)$, it is easy to verify that
  \begin{align*}
    \covop^{-1} &= g^{-1} \Op^{2}g^{-1}. 
  \end{align*}
  The last statement follows from \mycite[Proposition 1.18]{Prato06}.
\end{proof}

\section{Explicit expressions for varying Robin coefficients}\label{subsec:explicit}
Here, we give explicit expressions for the Robin coefficient
function $\beta$ from section
\ref{section:robin}. Let us first recall the  expressions
for the free-space Green's functions for $\Op$ and $\Op^2$.
For an elliptic differential operator $\precop$,
the free-space Green's function is defined (informally) as the
solution to the equation
\begin{align*}
  \precop \Phi(\x,\y) &= \delta_{\x}(\y), \forall \x,\y\in \mathbb{R}^d,
\end{align*}
where $\precop$ operates in $\y$. 

Recall that for a fixed $\x \in \mathbb{R}^d$, $\Phi_p$ satisfies
\begin{equation*}
\Op^p \Phi_p(\x, \cdot) =
(-\gamma \Delta + \alpha)^p \Phi_p(\x, \cdot) =
\gamma^{p} (-\Delta + \alpha / \gamma )^{p} \Phi_p(\x, \cdot) =
\delta_{\x}.
\end{equation*}
Denote $\kappa = \sqrt{ \alpha / \gamma}$. We see that $\Phi_p( \x,
\cdot ) = \gamma^{-p} (-\Delta + \kappa^2)^{-p}\delta_{\x}$. Now we
can recover $\Phi_p$ from known formulas. The following equalities for
the fundamental solutions to Helmholtz (sometimes called screened
Poisson) equations can be verified by differentiation, equation
\eqref{eq:matern expression} with $\nu = 0$ and taking the Laplacian
in polar coordinates, respectively:
\begin{align*}
  d = 1 &\Rightarrow \Phi_1( x, y) =  \frac{\exp( -\kappa |x - y| )}{2\kappa \gamma}, \\
  d = 2 &\Rightarrow \Phi_1(\x,\y) = \frac{1}{2\pi\gamma} K_{0}( \kappa \|\x-\y\|),\\
  d = 3 &\Rightarrow \Phi_1(\x,\y) = \frac{\kappa}{4\pi\gamma} 
  \frac{ \exp( -\kappa\|\x-\y\| )}{ \kappa \|\x-\y\|},
\end{align*}
where $K_\nu$ is the modified Bessel function of the second kind of
order $\nu \in \mathbb{R}$.  For higher powers of $\Op$, let
$\nu := p - d/2$
the free-space Green's function of $\Op^p = \Op^{ \nu + d/2 }$
is the Mat\'ern covariance function \mycite{LindgrenRueLindstroem11,
  Whittle63}:
\begin{align}\label{eq:matern expression}
\Phi_p( \x, \y )  
&= \frac{\sigma^2}{2^{\nu-1}\Gamma(\nu)} (\kappa\|\x-\y\|)^{\nu} K_{\nu}( \kappa\|\x-\y\|),\ p > 1, 
\end{align}
with
\begin{align}\label{eq:sig2}
  \sigma^2 
  &= \frac{\Gamma(\nu)}{\Gamma(\nu + d/2) (4\pi)^{d/2} \kappa^{2\nu} \gamma^{\nu + d/2}} \\
  &= \frac{\Gamma(\nu)}{\Gamma(\nu + d/2) (4\pi)^{d/2} \alpha^{\nu} \gamma^{d/2}}.
\end{align}
Using \mycite[eq.\ 10.30.2]{NIST:DLMF}, it can be verified that
$\Phi_p( \bs 0, \bs 0 ) = \sigma^2$, i.e.,
$\sigma^2$ is the pointwise variance of a Mat\'ern field.

Below, we present $\tilde\beta$ 
from equation \eqref{eq:beta}.
We use the facts that $K_{-\nu} = K_{\nu}$ and 
$(z^{\nu}K_{\nu}(z))' = -z^{\nu}K_{\nu-1}(z)$ \mycite[10.27.4, 10.29.4]{NIST:DLMF}.
We denote $r := \xmy$ and note
that all occurring Green's functions $\Phi_p, p=1,2$ only depend on $\kr$.
Thus, $\frac{\partial \Phi_p}{\dn} = \Phi_p'(\kr) \frac{\partial \kr}{\dn}$ with 
$\frac{\partial r}{\dn } =  \frac{ (\y-\x) \cdot \n }{ r }$,
where $\n$ is the outward pointing unit vector normal at $\y\in \partial \Omega$.
Note that all prefactors in $\Phi_p,p=1,2$ (i.e.,\ $\frac{ \sigma^2}{\Gamma (\nu ) 2^{\nu-1}}$)
cancel out so we ignore them from the outset.

\begin{align*}
  \tilde{\beta}_{2D}(\y) 
  &=  - \frac{\int_{\Omega}  \Phi_1(\kr)  \Phi_2'(\kr)\frac{\partial \kr }{\dn} +
    \Phi_2(\kr)  \Phi_1'(\kr) \frac{\partial \kr}{\dn} d\x}{2\int_{\Omega}
    \Phi_1(\kr) \Phi_2(\kr) d\x} \\
  &= \frac
  {\kappa \int_{\Omega} \kr[K_{0}^2(\kr) + K_{1}^2(\kr)] 
    \frac{ (\y-\x) \cdot \n }{r} d\x}
  {2\int_{\Omega}  \kr K_{1}(\kr) K_{0}(\kr)  d\x}\\[.5ex]
  &=\frac
  { \kappa\int_{\Omega} [ K_{1}^2(\kr) +  K_{0}^2(\kr) ] (\y-\x) \cdot \n d\x}
  {2\int_{\Omega} r K_{1}(\kr) K_{0}(\kr)  d\x}.
  \numberthis \label{eq:explicit beta 2D} 
\end{align*}
In three dimensions we obtain: 
\begin{align*}
  \tilde{\beta}_{3D}(\y) 
  &=  - \frac{\int_{\Omega}  \Phi_1(\kr) \Phi_2'(\kr) \frac{\partial \kr }{\dn} +
    \Phi_2(\kr) \Phi_1'(\kr) \frac{\partial \kr}{\dn} d\x}
  {2\int_{\Omega} \Phi_1(\kr) \Phi_2(\kr) d\x} \\
  &= \frac{
    \kappa \int_{\Omega} 
    [(\kr)^{-1}e^{-\kr}  \sqrt{\kr} K_{-\frac12}(\kr) + 
    \sqrt{\kr} K_{\frac12}(\kr) (e^{-\kr}\kr +  e^{-\kr}) (\kr)^{-2}]
    \frac{\partial r}{\dn} d\x 
  }{
    2\int_{\Omega}  \sqrt{\kr} K_{\frac12}(\kr) e^{-\kr} (\kr)^{-1}  d\x
  } \\
  &=  \frac{ 
    \kappa \int_{\Omega}  
    [ e^{-\kr} /\sqrt{\kr} K_{\frac12}(\kr) +  K_{\frac12}(\kr) (e^{-\kr}\kr +  e^{-\kr})(\kr)^{-\frac32}]
    \frac{\partial r}{\dn} d\x 
  }{
    2\int_{\Omega}  K_{\frac12}(\kr) e^{-\kr} / \sqrt{\kr}   d\x
  } \\
  &=  \frac{ \kappa
    \int_{\Omega} e^{-\kr} / \sqrt{\kr} K_{\frac12}(\kr) 
    [ 1 +  (\kr+1) / \kr ] \frac{\partial r}{\dn} d\x
  }{
    2\int_{\Omega} K_{\frac12}(\kr) e^{-\kr} / \sqrt{\kr} d\x
  } \\
  &= \frac{
    \kappa \int_{\Omega} r^{-\frac32} (2 + \frac{1}{\kr}) e^{-\kr} K_{\frac12}(\kr)  
    (\y-\x) \cdot \n  d\x
  }{
    2\int_{\Omega}  K_{\frac12}(\kr) e^{-\kr} r^{-\frac12} d\x
  }. 
  \numberthis \label{eq:explicit beta 3D}
\end{align*}
\bibliographystyle{unsrt}
\bibliography{georg_refs,refs}

\begin{thebibliography}{10}

\bibitem{Stuart10}
Andrew~M. Stuart.
\newblock Inverse problems: {A B}ayesian perspective.
\newblock {\em Acta Numerica}, 19:451--559, 2010.

\bibitem{Bui-ThanhGhattasMartinEtAl13}
Tan Bui-Thanh, Omar Ghattas, James Martin, and Georg Stadler.
\newblock A computational framework for infinite-dimensional {B}ayesian inverse
  problems {P}art {I}: {T}he linearized case, with application to global
  seismic inversion.
\newblock {\em SIAM Journal on Scientific Computing}, 35(6):A2494--A2523, 2013.

\bibitem{LindgrenRueLindstroem11}
Finn Lindgren, H{\aa}vard Rue, and Johan Lindstr{\"o}m.
\newblock An explicit link between {G}aussian fields and {G}aussian {M}arkov
  random fields: the stochastic partial differential equation approach.
\newblock {\em Journal of the Royal Statistical Society: Series B (Statistical
  Methodology)}, 73(4):423--498, 2011.

\bibitem{RoininenHuttunenJanneEtAl14}
Lassi Roininen, Janne M.~J. Huttunen, and Sari Lasanen.
\newblock Whittle-{M}at{\'e}rn priors for {B}ayesian statistical inversion with
  applications in electrical impedance tomography.
\newblock {\em Inverse Problems Imaging}, 8(2):561--586, 2014.

\bibitem{IsaacPetraStadlerEtAl15}
Tobin Isaac, Noemi Petra, Georg Stadler, and Omar Ghattas.
\newblock Scalable and efficient algorithms for the propagation of uncertainty
  from data through inference to prediction for large-scale problems, with
  application to flow of the {A}ntarctic ice sheet.
\newblock {\em Journal of Computational Physics}, 296:348--368, September 2015.

\bibitem{SimpsonLindgrenFinnEtAl12s}
Daniel Simpson, Finn Lindgren, and H{\aa}vard Rue.
\newblock In order to make spatial statistics computationally feasible, we need
  to forget about the covariance function.
\newblock {\em Environmetrics}, 23(1):65--74, 2012.

\bibitem{SimpsonLindgrenFinnEtAl12}
Daniel Simpson, Finn Lindgren, and H{\aa}vard Rue.
\newblock Think continuous: {M}arkovian {G}aussian models in spatial
  statistics.
\newblock {\em Spatial Statistics}, 1:16--29, 2012.

\bibitem{Whittle63}
Peter Whittle.
\newblock Stochastic-processes in several dimensions.
\newblock {\em Bulletin of the International Statistical Institute},
  40(2):974--994, 1963.

\bibitem{Besag81}
Julian Besag.
\newblock On a system of two-dimensional recurrence equations.
\newblock {\em Journal of the Royal Statistical Society. Series B
  (Methodological)}, pages 302--309, 1981.

\bibitem{CalvettiKaipioSomersalo06}
Daniela Calvetti, Jari~P Kaipio, and Erkki Somersalo.
\newblock Aristotelian prior boundary conditions.
\newblock {\em International Journal of Mathematics and Computer Science},
  1:63--81, 2006.

\bibitem{Hairer09}
Martin Hairer.
\newblock Introduction to {S}tochastic {PDE}s.
\newblock Lecture Notes, 2009.

\bibitem{Prato06}
Giuseppe Da~Prato.
\newblock {\em An Introduction to Infinite-dimensional Analysis}.
\newblock Universitext. Springer, 2006.

\bibitem{Evans10}
Lawrence~C. Evans.
\newblock {\em Partial Differential Equations}.
\newblock Graduate studies in mathematics. American Mathematical Society,
  second edition, 2010.

\bibitem{Cubature}
Steven~G. Johnson.
\newblock {Cubature---Adaptive Multi-dimension Integration}.
\newblock \url{http://ab-initio.mit.edu/wiki/index.php/Cubature}.

\bibitem{RueMartino07}
H{\aa}vard Rue and Sara Martino.
\newblock Approximate {B}ayesian inference for hierarchical {G}aussian {M}arkov
  random field models.
\newblock {\em Journal of statistical planning and inference},
  137(10):3177--3192, 2007.

\bibitem{LinLuYingEtAl09}
Lin Lin, Jianfeng Lu, Lexing Ying, Roberto Car, and Weinan E.
\newblock Fast algorithm for extracting the diagonal of the inverse matrix with
  application to the electronic structure analysis of metallic systems.
\newblock {\em Communications in Mathematical Sciences}, 7(3):755--777, 2009.

\bibitem{BekasCurioniFedulova09}
Costas Bekas, Alessandro Curioni, and Irina Fedulova.
\newblock Low cost high performance uncertainty quantification.
\newblock In {\em Proceedings of the 2nd Workshop on High Performance
  Computational Finance}, WHPCF '09, pages 8:1--8:8, New York, NY, USA, 2009.
  ACM.

\bibitem{BekasKokiopoulouSaad07}
Costas Bekas, Effrosyni Kokiopoulou, and Yousef Saad.
\newblock An estimator for the diagonal of a matrix.
\newblock {\em Applied Numerical Mathematics}, 57(11):1214--1229, 2007.

\bibitem{TangSaad12}
Jok~M. Tang and Yousef Saad.
\newblock A probing method for computing the diagonal of a matrix inverse.
\newblock {\em Numerical Linear Algebra with Applications}, 19(3):485--501,
  2012.

\bibitem{LoggMardalWells12}
Anders Logg, Kent-Andre Mardal, and Garth~N. Wells, editors.
\newblock {\em Automated Solution of Differential Equations by the Finite
  Element Method}, volume~84 of {\em Lecture Notes in Computational Science and
  Engineering}.
\newblock Springer, 2012.

\bibitem{NIST:DLMF}
{NIST Digital Library of Mathematical Functions}.
\newblock \url{http://dlmf.nist.gov/}, Release 1.0.10 of 2015-08-07.
\newblock Online companion to \cite{Olver:2010:NHMF}.

\bibitem{Olver:2010:NHMF}
F.~W.~J. Olver, D.~W. Lozier, R.~F. Boisvert, and C.~W. Clark, editors.
\newblock {\em {NIST Handbook of Mathematical Functions}}.
\newblock Cambridge University Press, New York, NY, 2010.
\newblock Print companion to \cite{NIST:DLMF}.

\end{thebibliography}

\end{document}